\documentclass[11pt]{article}
\usepackage{amsmath,amssymb, fullpage}
\usepackage{algorithmic}
\usepackage{algorithm}

\title{New Results on Quantum Property Testing}
\author{
Sourav Chakraborty\thanks{Centrum Wiskunde \&\ Informatica, Amsterdam. Email: {\tt \{sourav,ariem,rdewolf\}@cwi.nl}. RdW is partially supported by a Vidi grant from the Netherlands Organization for Scientific Research (NWO), and by the European Commission under the Integrated Project Qubit Applications (QAP) funded by the IST directorate as Contract Number 015848.}
\and
Eldar Fischer\thanks{Computer Science Faculty, Israel Institute of Technology (Technion). Email: {\tt eldar@cs.technion.ac.il}. Partially supported by an ERC-2007-StG grant number 202405-2 and by an ISF grant number 1101/06.}
\and
Arie Matsliah$^*$
\and
Ronald de Wolf$^*$
}

\date{}

\newtheorem{theorem}{Theorem}[section]

\newtheorem{lemma}[theorem]{Lemma}
\newtheorem{corollary}[theorem]{Corollary}

\newtheorem{definition}[theorem]{Definition}

\newcommand{\qed}{\rule{2mm}{2mm}}
\newenvironment{proof}{\par\noindent{\bf Proof.}\quad}{\hfill  $\qed$}
\newcommand{\QEstimate}{\mathrm{QEstimate}}

\newcommand{\zo}{{\{0,1\}}}

\newcommand{\poly}{{\mathrm{poly} }}

\newcommand{\PP}{{\cal P}}
\newcommand{\tPP}{{\widetilde{\cal P}}}
\newcommand{\tW}{{\widetilde{W}}}

\newcommand{\ignore}[1]{}
\newcommand{\ket}[1]{| #1 \rangle}
\newcommand{\Exp}{\mathbb{E}}
\newcommand{\Var}{\mathrm{Var}}

\newcommand{\norm}[1]{\lVert{#1}\rVert}


\begin{document}
\maketitle

\begin{abstract}
We present several new examples of speed-ups obtainable by quantum algorithms in the context of property testing.

First, motivated by sampling algorithms, we consider probability distributions given in the form of an oracle $f:[n]\to[m]$.
Here the probability $\PP_f(j)$ of an outcome $j\in[m]$ is the fraction of its domain that $f$ maps to $j$.
We give quantum algorithms for testing whether two such distributions are identical or $\epsilon$-far in
$L_1$-norm. Recently, Bravyi, Hassidim, and Harrow~\cite{BHH10} showed that if
$\PP_f$ and $\PP_g$ are both unknown (i.e., given by oracles $f$ and
$g$), then this testing can be done in roughly $\sqrt{m}$ quantum
queries to the functions. We consider the case where the second distribution is
known, and show that testing can be done with roughly $m^{1/3}$
quantum queries, which we prove to be essentially optimal. In contrast, it is
known that classical testing algorithms need about $m^{2/3}$ queries
in the unknown-unknown case and about $\sqrt{m}$ queries in the
known-unknown case. Based on this result, we also reduce the query
complexity of graph isomorphism testers with quantum oracle access.

While those examples provide polynomial quantum speed-ups, our third example gives a much larger 
improvement (constant quantum queries vs polynomial classical queries)
for the problem of testing periodicity, based on Shor's algorithm and a modification of
a classical lower bound by Lachish and Newman~\cite{lachish&newman:periodicity}.
This provides an alternative to a recent constant-vs-polynomial speed-up due to Aaronson~\cite{aaronson:bqpph}.
\end{abstract}

\newpage

\tableofcontents

\newpage

\section{Introduction}

Since the early 1990s, a number of \emph{quantum} algorithms have been
discovered that have much better query complexity than their best classical
counterparts~\cite{deutsch&jozsa,simon:power,grover:search,ambainis:edj,fgg:nandtreej,acrsz:andor}.
Around the same time, the area of \emph{property testing} gained prominence~\cite{blr:selftest,ggr,fis_sur,ron_sur}.
Here the aim is to design algorithms that can efficiently test whether a given very large piece of data satisfies
some specific property, or is ``far'' from having that property.

Buhrman et al.~\cite{bfnr:qprop} combined these two strands, exhibiting various testing problems
where quantum testers are much more efficient than classical testers.
There has been some recent subsequent work on quantum property testing,
such as the work of Friedl et al.~\cite{fmsp:testhidden} on testing hidden group properties,
Atici and Servedio~\cite{atici&servedio:testing} on testing juntas,
Inui and Le Gall~\cite{inui&legall:testing} on testing group solvability,
Childs and Liu~\cite{childs&liu:testing} on testing bipartiteness and expansion,
Aaronson~\cite{aaronson:bqpph} on ``Fourier checking'',
and Bravyi, Hassidim, and Harrow~\cite{BHH10} on testing distributions.
We will say more about the latter papers below.

In this paper we continue this line of research, coming up with a number of new examples where quantum testers
substantially improve upon their classical counterparts. It should be noted that we do not invent new quantum algorithms
here---rather, we use known quantum algorithms as subroutines in otherwise {\em classical} testing algorithms.

\subsection{Distribution Testing}

How many samples are needed to determine whether two distributions
are identical or have $L_1$-distance more than $\epsilon$? This is a
fundamental problem in statistical hypothesis testing and also
arises in other subjects like property testing and machine learning.

We use the notation $[n] = \{1,2,3,\ldots,n\}$.
For a function $f:[n] \to [m]$, we denote by $\PP_f$ the distribution over $[m]$ in which the weight $\PP_f(j)$ of every $j \in [m]$
is proportional to the number of elements $i \in [n]$ that are mapped to $j$.
We use this form of representation for distributions in order to allow \emph{queries}.
Namely, we assume that the function $f:[n]\to [m]$ is accessible by an oracle of the form
$\ket{x}\ket{b} \mapsto \ket{x}\ket{b \oplus f(x)}$, where $x$ is a $\log n$-bit string,
$b$ and $f(x)$ are $\log m$-bit strings and $\oplus$ is bitwise
addition modulo two. Note that a classical random sample according
to a distribution $\PP_f$ can be simply obtained by picking $i
\in [n]$ uniformly at random and evaluating $f(i)$. In fact, a classical algorithm cannot make a better use of the oracle,
since the actual labels of the domain $[n]$ are irrelevant.
See Section~\ref{sec:sampling} for more on the relation between sampling a distribution and querying a function.

We say that the distribution $\PP_f$ is \emph{known} (or
\emph{explicit}) if the function $f$ is given explicitly, and hence
all probabilities $\PP_f(j)$ can be computed. $\PP_f$ is
\emph{unknown}  (or \emph{black-box}) if we only have oracle access
to the function $f$, and no additional information about $f$ is
given. Two distributions $\PP_f,\PP_g$ defined by functions $f,g:[n]
\to [m]$ are {\em $\epsilon$-far} if the $L_1$-distance between them
is at least $\epsilon$, i.e., $\norm{\PP_f - \PP_g}_1 = \sum_{j=1}^m
|\PP_f(j)-\PP_g(j)| \geq \epsilon$. Note that $f=g$ implies
$\PP_f=\PP_g$ but not vice versa (for instance, permuting $f$ leaves
$\PP_f$ invariant). Two problems of testing distributions can be formally stated as follows:
\begin{itemize}
\item \textbf{unknown-unknown case.}
Given $n,m, \epsilon$ and oracle access to $f,g:[n] \to [m]$, how many queries to $f$ and $g$ are required
in order to determine whether the distributions $\PP_f$ and $\PP_g$ are identical or $\epsilon$-far?

\item \textbf{known-unknown case.}
Given $n,m, \epsilon$, oracle access to $f:[n] \to [m]$ and a known distribution $\PP_g$ (defined by an explicitly given function $g:[n]\to [m]$),
how many queries to $f$ are required to determine whether $\PP_f$ and $\PP_g$ are identical or $\epsilon$-far?
\end{itemize}

If only \emph{classical} queries are allowed (where querying the distribution means asking for a random sample), the answers to these problems are well known. For the unknown-unknown case
 Batu, Fortnow, Rubinfeld, Smith, and White \cite{BFR+00} proved an upper bound of $\widetilde{O}(m^{2/3})$
on the query complexity, and Valiant \cite{Val} proved a matching (up to polylogarithmic factors) lower bound.
For the known-unknown case, Goldreich and Ron \cite{GR} showed a lower bound of $\Omega(\sqrt{m})$ queries and
 Batu, Fischer, Fortnow, Rubinfeld, Smith, and White \cite{BFF+01} proved a nearly
tight upper bound of $\widetilde{O}(\sqrt{m})$ queries.\footnote{These classical lower bounds are stated
in terms of number of samples rather than number of queries, but it is not hard to see that they hold in both models.
In fact, the $\sqrt{m}$ classical query lower bound for the known-unknown case follows by the same argument
as the quantum lower bound in Section~\ref{sec:qlbs}.}

\subsubsection{Testing with Quantum Queries}

Allowing quantum queries for accessing distributions, Bravyi,
Hassidim, and Harrow~\cite{BHH10} recently showed that the
$L_1$-distance between two unknown distributions can actually be
estimated up to small error with only $O(\sqrt{m})$ queries. Their
result implies an $O(\sqrt{m})$ upper bound on the quantum query
complexity for the unknown-unknown testing problem defined above.
In this paper we consider the known-unknown case, and prove nearly tight bounds on its quantum query complexity.

\begin{theorem}\label{thm:1unknown} Given $n,m, \epsilon$, oracle access to $f:[n] \to [m]$ and a known distribution
$\PP_g$ (defined by an explicitly given function $g:[n]\to [m]$), the quantum query complexity of determining whether
$\PP_f$ and $\PP_g$ are identical or $\epsilon$-far is $O(\frac{m^{1/3} \log^2 m\log \log m}{\epsilon^5}) = m^{1/3} \cdot \poly(\frac{1}{\epsilon}, \log m)$.
\end{theorem}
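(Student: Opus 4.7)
The strategy is to classically reduce identity testing against a known distribution to $O(\log m)$ uniformity-testing subproblems on the buckets of the known distribution, and then to solve each subproblem with a quantum collision-estimation subroutine whose cost on a universe of size $s$ is $\tilde{O}(s^{1/3}/\poly(\epsilon))$ queries rather than the classical $\tilde{\Theta}(\sqrt{s})$.

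\emph{Bucketing reduction.} Following the BFF+01 paradigm, I would partition $[m]$ into $O(\log m)$ dyadic buckets $B_k = \{j : 2^{-(k+1)} < \PP_g(j) \leq 2^{-k}\}$, plus one ``light'' bucket of total $\PP_g$-mass at most $\epsilon/8$. Since $g$ is explicitly given, this bucketing is precomputed classically. A standard lemma then shows that, up to $O(\epsilon/\log m)$ loss per bucket, $\PP_f = \PP_g$ iff (a) the total $\PP_f$-mass on each $B_k$ is within $O(\epsilon/\log m)$ of $\PP_g(B_k)$, and (b) the conditional distribution of $\PP_f$ on each $B_k$ is $O(\epsilon)$-close in $L_1$ to uniform on $B_k$; this uses the fact that $\PP_g$ restricted to $B_k$ is already within a factor $2$ of uniform. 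Part (a) needs only $\poly(1/\epsilon, \log m)$ classical samples and contributes nothing to the leading term.

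\emph{Quantum collision-based uniformity tester.} For part (b) I would build a quantum subroutine that, given $h:[n] \to [s]$, estimates the collision probability $C(h) = \sum_{j \in [s]} \PP_h(j)^2$ to additive error $\epsilon^2/(3s)$; this suffices because uniform gives $C(h) = 1/s$ and $\epsilon$-far distributions give $C(h) \geq (1+\epsilon^2)/s$. The subroutine follows the Brassard--H\o yer--Tapp template: (i) query $h$ on a uniformly random set $T \subseteq [n]$ of size $k$; (ii) build the quantum oracle $A$ that picks $x \in [n]$ uniformly, queries $h(x)$, and marks $x$ iff $h(x) \in h(T)$; (iii) apply amplitude estimation to $A$. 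The success probability of $A$ concentrates around $p \approx k\,C(h)$, so estimating $p$ to additive error $k\epsilon^2/s$ yields the desired estimate of $C(h)$. Balancing the $k$ preprocessing queries against the $\tilde{O}(\sqrt{s/k}/\epsilon^2)$ amplitude-estimation calls gives $k \approx s^{1/3}/\poly(\epsilon)$ and total cost $\tilde{O}(s^{1/3}/\poly(\epsilon))$.

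\emph{Combining and main obstacle.} Summing over the $O(\log m)$ buckets and using the concavity bound $\sum_k |B_k|^{1/3} \leq (\log m)^{2/3} (\sum_k |B_k|)^{1/3} \leq (\log m)^{2/3} m^{1/3}$, the total query cost is $\tilde{O}(m^{1/3}/\poly(\epsilon))$, matching the theorem. The main technical obstacle is calibrating the amplitude estimator: the signal to be detected scales like $\epsilon^2/s$ on top of a base probability $\Theta(k/s)$, so achieving the required precision with high enough success probability to survive a union bound over $O(\log m)$ buckets costs several extra $\poly(1/\epsilon)$ and $\log m$ factors, which is exactly what drives the $\log^2 m \log\log m / \epsilon^5$ overhead. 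A smaller subtlety is that within each bucket one actually wants $\PP_f \!\restriction_{B_k}$ to match $\PP_g\!\restriction_{B_k}$ rather than be uniform, but the factor-$2$ regularity of $\PP_g$ on $B_k$ absorbs the difference into the $L_1$-parameter only.
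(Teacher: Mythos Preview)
Your high-level plan matches the paper's: bucket the known distribution, reduce to uniformity testing on each bucket, and test uniformity by sampling $k\approx s^{1/3}$ points and then using amplitude estimation to measure $\PP_f(h(T))\approx k\cdot C(h)$. The paper's Algorithm~1 is exactly this sample-then-estimate subroutine, and the paper's Algorithm~2 is exactly the bucketing wrapper.

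There is, however, a real gap in the last step you dismiss as ``a smaller subtlety.'' With dyadic buckets, $\PP_g\!\restriction_{B_k}$ is only within a factor~$2$ of uniform, so its collision probability can be as large as $\Theta(1)/|B_k|$ above $1/|B_k|$. Thus when $\PP_f=\PP_g$, the quantity your tester estimates already sits at $(1+\Theta(1))/|B_k|$, which is \emph{larger} than the threshold $(1+\epsilon^2)/|B_k|$ you use to declare ``$\epsilon$-far from uniform.'' The factor-$2$ regularity is therefore not absorbed into the $L_1$-parameter; it swamps the $\epsilon^2/|B_k|$ signal entirely, and your tester would reject identical distributions. The paper handles this by (i) bucketing at granularity $(1+\epsilon/4)$ rather than $2$, so that $\|\PP_g\!\restriction_{M_i}-U\|_\infty\le \epsilon/(4|M_i|)$, and (ii) proving that the uniformity tester is \emph{tolerant}: it accepts whenever $\|\PP_f-U\|_\infty\le \epsilon/(4m)$, because then $C(\PP_f)\le (1+\epsilon^2/16)/m$, which stays below the rejection threshold. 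The paper explicitly flags this tolerance as the crucial new ingredient over both the classical BFF+01 tester and the BHH10 quantum uniformity tester. Once you refine the bucketing to width $(1+O(\epsilon))$ and state and prove the $L_\infty$-tolerance of your collision estimator (which also requires the truncation trick $\tilde\PP_f(y)=\min\{3/m,\PP_f(y)\}$ to get Hoeffding concentration), your argument goes through and coincides with the paper's.
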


We prove Theorem~\ref{thm:1unknown} in two parts. First, in
Section~\ref{sec:uniform}, we prove that with
$O(\frac{m^{1/3}}{\epsilon^2})$ quantum queries it is possible to test
whether a black-box distribution $\PP_f$ (defined by some $f:[n] \to
[m]$) is $\epsilon$-close to uniform. We actually prove that this can be even
done {\em tolerantly} in a sense, meaning that a distribution that
is close to uniform in the $L_\infty$ norm is accepted
with high probability (see Theorem \ref{theo:unitester} for the formal statement).
Then, in Section~\ref{sec:general}, we use the bucketing
technique (see Section \ref{sec:bucketing}) to reduce the task of testing
closeness to a known distribution to testing uniformity.

We stress that the main difference between the classical algorithm of \cite{BFF+01} and ours is that in \cite{BFF+01} they
check the ``uniformity'' of the unknown distribution in every bucket by approximating the corresponding $L_2$ norms of the conditional
distributions. It is not clear if one can gain anything (in the quantum case) using the same strategy, since
we are not aware of any quantum procedure that can approximate the $L_2$ norm of a distribution with less than $\sqrt{m}$ queries. Hence,
we reduce the main problem {\em directly} to the problem of testing uniformity.
For this reduction to work, the uniformity tester has to be tolerant in the sense mentioned above (see Section~\ref{sec:general} for details).

A different quantum uniformity tester was recently discovered (independently) in \cite{BHH10}.
We note that our version has the advantages of being tolerant, which is crucial for the application above, and it has only polynomial dependence on
$\epsilon$ (instead of exponential), which is essentially optimal.

\subsubsection{Quantum Lower Bounds}
Known quantum query lower bounds for the collision
problem~\cite{aaronson&shi:collision,ambainis:colsmallrange,kutin:collision}
imply that in both known-unknown and unknown-unknown cases roughly $m^{1/3}$ quantum queries
are required. In fact, the lower bound applies even for testing uniformity (see proof in Section~\ref{sec:qlbs}):

\begin{theorem}\label{thm:lb} Given $n,m, \epsilon$ and oracle access to $f:[n] \to [m]$, the quantum query complexity of determining whether
$\PP_f$ is uniform or $\epsilon$-far from uniform is $\Omega(m^{1/3})$.
\end{theorem}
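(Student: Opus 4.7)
The plan is a straightforward reduction from the collision problem. By the quantum lower bounds of Aaronson--Shi~\cite{aaronson&shi:collision}, extended to small ranges by Ambainis~\cite{ambainis:colsmallrange} and Kutin~\cite{kutin:collision}, distinguishing a random permutation $f:[N]\to [N]$ from a uniformly random $2$-to-$1$ function $f:[N]\to [N]$ requires $\Omega(N^{1/3})$ quantum queries. The goal is to exhibit instances of our uniformity-testing problem whose two YES/NO cases coincide with these two distributions on $f$.

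Concretely, I would take $n=m=N$ (with $m$ even). In the YES case $f$ is a permutation, and the induced distribution $\PP_f$ assigns mass $1/m$ to every $j\in[m]$, i.e.\ $\PP_f$ is exactly uniform. In the NO case $f$ is $2$-to-$1$, so its image is a set $S\subseteq[m]$ of size $m/2$, and $\PP_f(j)=2/m$ for $j\in S$ and $\PP_f(j)=0$ otherwise. The $L_1$-distance from the uniform distribution $U_m$ is then
$$\norm{\PP_f - U_m}_1 \;=\; \sum_{j\in S}\left|\frac{2}{m}-\frac{1}{m}\right| + \sum_{j\notin S}\left|0-\frac{1}{m}\right| \;=\; \frac{1}{2}+\frac{1}{2} \;=\; 1,$$
so $\PP_f$ is $1$-far from uniform. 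Hence for every $\epsilon\le 1$, any quantum algorithm that distinguishes $\PP_f$ being uniform from $\PP_f$ being $\epsilon$-far from uniform must in particular distinguish the permutation case from the $2$-to-$1$ case, and so must make $\Omega(m^{1/3})$ queries to $f$.

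There is essentially no hard step here: the only minor points are making sure $m$ is even (odd $m$ is handled by padding one extra element, which affects neither the uniformity in the YES case by more than $O(1/m)$ nor the asymptotic bound), and noting that a uniformity tester gets oracle access to $f$ in precisely the same quantum oracle model as the collision problem. Since the collision lower bound is already stated for quantum query complexity, no further quantum machinery is needed, and the $\Omega(m^{1/3})$ bound follows immediately.
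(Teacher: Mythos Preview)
Your proposal is correct and is essentially identical to the paper's proof: the paper also sets $n=m$, observes that a random permutation gives $\PP_f=U$ while a random 2-to-1 function gives $\norm{\PP_f-U}_1=1$, and invokes the $\Omega(m^{1/3})$ collision lower bound of Ambainis~\cite{ambainis:colsmallrange} and Kutin~\cite{kutin:collision}. The only cosmetic difference is that the paper does not bother with the parity-of-$m$ padding remark.
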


The main remaining open problem is to tighten the bounds on the quantum query complexity for the unknown-unknown case.
It would be very interesting if this case could also be tested using roughly $m^{1/3}$ quantum queries.
In Section~\ref{sec:reconstructionlower} we show that the easiest way to do this (just reconstructing both unknown distributions up to small error) will not work---it requires $\Omega(m/\log m)$ quantum queries.

\subsection{Graph Isomorphism Testing}

Fischer and Matsliah~\cite{FM} studied the problem of testing
graph isomorphism in the dense-graph model, where the graphs are represented by their
adjacency matrices, and querying the graph corresponds to reading a single entry from its adjacency  matrix.
The goal in isomorphism testing is to determine, with high
probability, whether two graphs $G$ and $H$ are isomorphic or
$\epsilon$-far from being isomorphic, making as few queries as possible.
(The graphs are $\epsilon$-far from being isomorphic if at least an $\epsilon$-fraction
of the entries in their adjacency matrices need to be modified in order to make
them isomorphic.)

In \cite{FM} two models were considered:
\begin{itemize}
\item \textbf{unknown-unknown case.} Both $G$ and $H$ are
unknown, and they can only be accessed by querying their adjacency matrices.

\item \textbf{known-unknown case.} The graph $H$ is known (given in advance to the tester), and
the graph $G$ is unknown (can only be accessed by querying its adjacency matrix).
\end{itemize}

As usual, in both models the query complexity is the worst-case number of
queries needed to test whether the graphs are isomorphic. \cite{FM}
give nearly tight bounds of
$\widetilde{\Theta}(\sqrt{|V|})$ on the (classical) query complexity in the
known-unknown model. For the unknown-unknown model they prove an upper
bound of $\widetilde{O}(|V|^{5/4})$ and a lower bound of
$\Omega(|V|)$ on the query complexity.

Allowing quantum queries\footnote{A quantum query to the adjacency
matrix of a graph $G$ can be of the form
$\ket{i,j}\ket{b} \mapsto \ket{i,j}\ket{b\oplus G(i,j)}$, where
$G(i,j)$ is the $(i,j)$-th entry of the adjacency matrix of $G$ and
$\oplus$ is addition modulo two.}, we can use our aforementioned results to prove the following query-complexity bounds
for testing graph isomorphism (see proof in Section \ref{sec:uu_gi}):
\begin{theorem} \label{theo:uu_gi}
The quantum query complexity of testing graph isomorphism in the known-unknown case is
$\widetilde{\Theta}(|V|^{1/3})$, and in the unknown-unknown case it is between $\Omega(|V|^{1/3})$ and $\widetilde{\Theta}(|V|^{7/6})$.
\end{theorem}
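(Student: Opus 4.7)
The overall plan is to plug our quantum distribution testers into the Fischer--Matsliah~\cite{FM} classical reductions from graph isomorphism testing to distribution closeness testing. The FM algorithms, in both models, extract from each input graph a ``signature'' distribution on an appropriate domain, and feed these distributions to a classical closeness-tester. The crucial observation is that this reduction is quantum-friendly: any procedure producing a single classical sample from the signature distribution by $q$ classical graph queries lifts, via standard uncomputation, to a unitary answering one quantum signature-oracle query using $q$ quantum graph queries. Consequently, replacing the classical closeness-tester by a quantum one yields a quantum graph-isomorphism tester whose total cost is the quantum distribution-testing cost times the per-sample simulation cost.

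Instantiating this recipe in the known-unknown case, the known graph $H$ induces an explicit signature distribution while the unknown $G$ induces a black-box one; substituting Theorem~\ref{thm:1unknown} for the $\widetilde{O}(\sqrt{|V|})$ classical closeness-tester of~\cite{BFF+01} inside the FM algorithm yields the improved $\widetilde{O}(|V|^{1/3})$ upper bound. In the unknown-unknown case, we substitute the $O(\sqrt{m})$ quantum unknown-unknown closeness-tester of~\cite{BHH10} for the classical $\widetilde{O}(m^{2/3})$ one used by FM, and tracing the cost through their algorithm yields the $\widetilde{O}(|V|^{7/6})$ upper bound. The matching $\Omega(|V|^{1/3})$ lower bound in both models follows by a reduction from the quantum uniformity-testing lower bound of Theorem~\ref{thm:lb}: given $f:[n]\to[m]$, one can encode $f$ as a graph $G_f$ on $\Theta(m)$ vertices together with a canonical reference graph $H$ so that $\PP_f$ being uniform corresponds to $G_f\cong H$, while $\PP_f$ being $\epsilon$-far from uniform forces $G_f$ to be $\Omega(\epsilon)$-far from $H$ in edit distance. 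Distance-preserving encodings of this kind were used in~\cite{FM} for their classical lower bounds and carry over to the quantum setting since a graph-query to $G_f$ is answered by $O(1)$ quantum queries to $f$. The bound applies in the known-unknown setting and transfers \emph{a fortiori} to the unknown-unknown one.

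The main obstacle is verifying that each step of the FM reductions genuinely survives the passage to quantum oracles with only polylogarithmic overhead: one has to check that all sampling steps used to form the signature distributions are reversible in the appropriate sense, and that the $\epsilon$-parameter losses through the various stages of the FM analysis remain compatible with the parameter dependences of the quantum testers we plug in. A secondary technicality is arithmetic: the $|V|^{7/6}$ exponent in the unknown-unknown upper bound depends on the detailed cost breakdown of the FM algorithm, and we need to verify that the $m^{1/6}$ factor saved by the quantum closeness-tester indeed translates into the claimed exponent improvement, rather than being swamped by a non-distribution-testing component of the FM cost.
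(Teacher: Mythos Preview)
Your treatment of the known-unknown case and of the unknown-unknown lower bound matches the paper: plug Theorem~\ref{thm:1unknown} into the FM reduction for the upper bound, and inherit the $\Omega(|V|^{1/3})$ lower bound from FM's reduction from the one-to-one versus two-to-one problem (which is exactly the collision lower bound underlying Theorem~\ref{thm:lb}).

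The gap is in your plan for the $\widetilde{O}(|V|^{7/6})$ unknown-unknown upper bound. You assume FM's unknown-unknown algorithm invokes the classical $\widetilde{O}(m^{2/3})$ unknown-unknown distribution tester and propose swapping in the $O(\sqrt{m})$ tester of~\cite{BHH10}. But FM does \emph{not} use an unknown-unknown distribution tester here: it first queries all of $U_G\times V(G)$ so that every label-distribution $\PP_{C_G}$ with $C_G\subseteq U_G$ becomes known exactly, and then applies the classical \emph{known-unknown} tester (already only $\widetilde O(\sqrt{|V|})$ samples) to compare the known $\PP_{C_G}$ against each unknown $\PP_{C_H}$, reusing a single sample set $S\subseteq V(H)$. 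The total cost is $|U_G|\cdot|V|+|U_H|\cdot|S|$, balanced at $|V|^{5/4}$ by taking $|U_G|\approx|V|^{1/4}$, $|U_H|\approx|V|^{3/4}$, $|S|\approx\sqrt{|V|}$. Swapping in an $O(\sqrt{m})$ quantum tester therefore buys nothing. What the paper actually does is replace the known-unknown subroutine by Theorem~\ref{thm:1unknown}, shrinking $|S|$ to $\widetilde O(|V|^{1/3})$, and then \emph{rebalance} to $|U_G|=\widetilde O(|V|^{1/6})$ and $|U_H|=\widetilde O(|V|^{5/6})$ (which still guarantees $|U_H\cap\sigma(U_G)|\geq\log^2|V|$ with high probability), so that both cost terms become $\widetilde O(|V|^{7/6})$. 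Your proposal misses both the correct subroutine to replace and this rebalancing step; without the latter, the term $|U_G|\cdot|V|$ alone already sits at $|V|^{5/4}$.
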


\subsection{Periodicity Testing}
The quantum testers mentioned above obtain polynomial speed-ups over their classical counterparts,
and that is the best one can hope to obtain for these problems.
The paper by Buhrman et al.~\cite{bfnr:qprop}, which first studied quantum property testing,
actually provides two super-polynomial separations between quantum and classical testers:
a constant-vs-$\log n$ separation based on the Bernstein-Vazirani algorithm,
and a (roughly) $\log n$-vs-$\sqrt{n}$ separation based on Simon's algorithm.
They posed as an open problem whether there exists a constant-vs-$n$ separation.
Recently, in an attempt to construct oracles to separate BQP from the Polynomial Hierarchy,
Aaronson~\cite{aaronson:bqpph} analyzed the problem of ``Fourier checking'': 
roughly, the input consists of two $m$-bit Boolean functions $f$ and $g$,
such that $g$ is either strongly or weakly correlated with the Fourier transform of $f$ 
(i.e., $g(x)=\mbox{sign}(\hat{f}(x))$ either for most $x$ or for roughly half of the $x$).
He proved that quantum algorithms can decide this with $O(1)$ queries while classical
algorithms need $\Omega(2^{m/4})$ queries.
Viewed as a testing problem on an input of length $n=2\cdot 2^m$ bits, 
this is the first constant-vs-polynomial separation between quantum and classical testers.

In Section~\ref{sec:periodicity} we obtain another separation that is (roughly) constant-vs-$n^{1/4}$.
Our testing problem is reverse-engineered from the periodicity problem solved by
Shor's famous factoring algorithm~\cite{shor:factoring}.
Suppose we are given a function $f:[n]\to [m]$, which we can query in the usual way.
We call $f$ \emph{1-1-$p$-periodic} if the function is injective on $[p]$ and repeats afterwards.
Equivalently:
\begin{quote}
$f(i)=f(j)$ iff $i=j$ mod $p$.
\end{quote}
Note that we need $m\geq p$ to make this possible. In fact, for simplicity we will assume $m\geq n$.
Let $\PP_p$ be the set of functions $f:[n]\to [m]$ that are 1-1-$p$-periodic, and $\PP_{q,r}=\cup_{p=q}^{r}\PP_p$.
The {\sc 1-1-periodicity testing} problem,
with parameters $q\leq r$ and small fixed constant $\epsilon$, is as follows:
\begin{quote}
given an $f$ which is either in $\PP_{q,r}$ or $\epsilon$-far from $\PP_{q,r}$, find out which is the case.
\end{quote}
Note that \emph{for a given $p$} it is easy to test whether $f$ is $p$-periodic or $\epsilon$-far from it:
choose an $i\in[p]$ uniformly at random, and test whether $f(i)=f(i+kp)$ for a random positive integer $k$.
If $f$ is $p$-periodic then these values will be the same, but if $f$ is $\epsilon$-far from $p$-periodic
then we will detect this with constant probability.
However, $r-q+1$ different values of $p$ are possible in $\PP_{q,r}$, and
we will see below that we cannot efficiently test all of them---at least not in the classical case.
In the \emph{quantum} case, however, we can.

\begin{theorem}\label{thperiodicity}
There is a quantum tester for $\PP_{\sqrt{n}/4,\sqrt{n}/2}$ using $O(1)$ queries (and polylog$(n)$ time),
while for every even integer $r\in[2,n/2)$, every classical tester for $\PP_{r/2,r}$ needs to make $\Omega(\sqrt{r/\log r\log n})$ queries. In particular, testing $\PP_{\sqrt{n}/4,\sqrt{n}/2}$ requires $\Omega(n^{1/4}/\log n)$ classical queries.
\end{theorem}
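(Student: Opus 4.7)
The theorem bundles two statements: a constant-query quantum tester and a classical lower bound of $\Omega(\sqrt{r/(\log r\log n)})$, so the proof has two essentially independent halves.

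For the upper bound the plan is to invoke Shor's period-finding subroutine~\cite{shor:factoring}. With $O(1)$ quantum queries and polylog classical post-processing via continued fractions, it outputs, on input $f\in\PP_p$, the integer $p$ with constant probability; boosting to high probability costs another $O(1)$ repetitions. The restriction of the range to $[\sqrt{n}/4,\sqrt{n}/2]$ is what makes this clean: any proper divisor of $p\leq\sqrt{n}/2$ is at most $p/2\leq\sqrt{n}/4$, so no two elements of the interval can be divisors of each other, and the period is unique within the interval. Given a candidate $p$ returned by Shor, verify it with $O(1/\epsilon)$ classical queries by sampling random pairs $(i,j)\in[n]^2$ conditioned on $i\equiv j\pmod p$ and checking $f(i)=f(j)$, and separately with $i\not\equiv j\pmod p$ and checking $f(i)\neq f(j)$. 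If $f\in\PP_{\sqrt{n}/4,\sqrt{n}/2}$ both tests always accept for the true $p$; if $f$ is $\epsilon$-far from $\PP_{\sqrt{n}/4,\sqrt{n}/2}$, then for every $p$ in the range a standard averaging argument shows that at least one of the two tests rejects with probability $\Omega(\epsilon)$, so we are correct regardless of which candidate (if any) Shor returns.

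For the classical lower bound I would apply Yao's minimax principle with two input distributions. The YES distribution samples $p$ uniformly from $[r/2,r]$ and then a uniformly random $1$-$1$-$p$-periodic function $f:[n]\to[m]$. The NO distribution is supported on functions that are $\epsilon$-far from $\PP_{r/2,r}$ with probability $1-o(1)$, for instance a uniformly random injection $f:[n]\to[m]$ with $m\gg n$; one checks that making an injection $1$-$1$-$p$-periodic with $p\leq n/2$ requires collapsing $\Omega(n)$ values, so this support is $\epsilon$-far as required. The strategy is then to bound the total variation distance between the sequences of answers observed under the two distributions after $q$ queries. In the NO case the answers are distinct; in the YES case they are also distinct unless the queried indices contain a pair $i\equiv j\pmod p$ for the hidden $p$. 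For fixed $p$ this collision probability is $O(q^2/p)=O(q^2/r)$, giving the straightforward $\sqrt{r}$ barrier. The extra $\log r\log n$ factor in the denominator comes from the fact that an adaptive tester may try to align its queries against any of the $\Theta(r)$ candidates for $p$, and from the fact that each query reveals only $O(\log n)$ bits; absorbing this extra entropy via a second-moment and union-bound argument in the spirit of Lachish and Newman~\cite{lachish&newman:periodicity} yields the claimed $\Omega(\sqrt{r/(\log r\log n)})$ bound, which specializes to $\Omega(n^{1/4}/\log n)$ when $r=\sqrt{n}/2$.

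The hard part is the classical lower bound. Two subtleties must be handled: reducing adaptive testers to the essentially nonadaptive collision-hunting picture, and adapting the Lachish-Newman argument, stated for general periodic Boolean functions, to our $1$-$1$ setting with range much larger than the period. The latter should only help, since restricting to injections on a single period makes the YES marginals look even more like a uniformly random injection and hence closer to the NO distribution; the main technical work will be tracking where $\log r$ and $\log n$ appear in the second-moment calculation. The upper bound, by contrast, is mostly a packaging of Shor's algorithm with a Markov-style verification step, and its correctness rests entirely on the divisor-free structure of the chosen period interval.
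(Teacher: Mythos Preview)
Your quantum upper bound matches the paper's: Shor's period-finding followed by an $O(1)$-query verification of the candidate period. A small quibble: the relevance of $p\le\sqrt{n}/2$ is not the divisor-free structure of the interval you emphasize, but rather that continued fractions need this bound to recover $c/p$ uniquely, and the $1$-$1$ property is what makes Shor's Fourier sampling collapse to a clean arithmetic progression. But the overall scheme is the same.

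The classical lower bound, however, has a real gap: you are missing the key idea. The paper's YES distribution samples $p$ uniformly from the \emph{primes} in $[r/2,r]$, not from all integers there. In the Yao setup the tester is deterministic, so the query set $Q$ is fixed; your phrase ``for fixed $p$ this collision probability is $O(q^2/p)$'' is then a non-sequitur, since for fixed $Q$ and fixed $p$ there either is or is not a pair $i,j\in Q$ with $p\mid(i-j)$. What must be bounded is the probability, over the \emph{random} $p$, that some pair in the fixed $Q$ collides modulo $p$. If $p$ ranges over all of $[r/2,r]$, the adversarial tester can choose differences $i-j$ with many divisors in that interval, and the per-pair bound $O(1/r)$ you implicitly use is simply false. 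Restricting to primes is what makes the count work: there are $\Theta(r/\log r)$ primes in $[r/2,r]$, while any $|i-j|<n$ has at most $\log_2 n$ prime divisors, so
\[
\Pr_p\bigl[\exists\,i,j\in Q:\ p\mid(i-j)\bigr]\ \le\ \binom{q}{2}\cdot\log n\cdot\frac{2\log r}{r},
\]
and setting this to $o(1)$ yields exactly $q=o\bigl(\sqrt{r/(\log r\log n)}\bigr)$.

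In particular, the $\log r$ and $\log n$ factors are not traces of adaptivity or of a bits-per-query entropy accounting as you speculate; they are the prime density and the prime-factor bound, respectively. The paper's argument is also simpler than you anticipate: once the bad event is excluded, $f(Q)$ is uniform on tuples with distinct coordinates under ${\cal D}_P$ and $(1\pm o(1))m^{-q}$ on all tuples under ${\cal D}_N$, and since $q\ll m$ these are $o(1)$-close. No second-moment computation is needed, and adaptivity requires no separate handling because the indistinguishability holds pointwise for every $Q$. (Your random-injection NO distribution is fine and arguably cleaner than the paper's ``uniform on $\epsilon$-far functions''; that part is not the issue.)
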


The quantum upper bound is obtained by a small modification of Shor's algorithm: use Shor to find the period (if there is one)
and then test this purported period with another $O(1)$ queries.
The classical lower is based on ideas from Lachish and Newman~\cite{lachish&newman:periodicity},
who proved classical testing lower bounds for more general periodicity-testing problems.
However, while we follow their general outline, we need to modify their proof
since it specifically applies to functions with range $\zo$, which is different from our 1-1 case.
The requirement of being 1-1 within each period is crucial for the upper bound---quantum 
algorithms need about $\sqrt{n}$ queries to find the period of functions with range $\zo$.
While our separation is slightly weaker than Aaronson's separation for Fourier checking 
(our classical lower bound is $n^{1/4}/\log n$ instead $n^{1/4}$), the problem of periodicity testing 
is arguably more natural, and it may have more applications than Fourier checking.

\section{Preliminaries}
For any distribution $\PP$ on $[m]$ we denote by $\PP(j)$ the
probability mass of $j \in [m]$ and for any $M \subseteq [m]$ we denote by $\PP(M)$ the sum $\sum_{j \in M} \PP(j)$.
For a function $f:[n] \to [m]$, we denote by $\PP_f$ the distribution over $[m]$ in which the weight $\PP_f(j)$ of every $j \in [m]$
is proportional to the number of elements $i \in [n]$ that are mapped to $j$.
Formally, for all $j \in [m]$ we define $\PP_f(j) \triangleq \Pr_{i \sim U}[f(i) = j] = \frac{|f^{-1}(j)|}{n}$,
where $U$ is the uniform distribution on $[n]$, that is $U(i) = 1/n$
for all $i\in[n]$. Whenever the domain is clear from context (and
may be something other than $[n]$), we also use $U$ to denote the
uniform distribution on that domain.

Let $\norm{\cdot}_1$ and $\norm{\cdot}_\infty$ stand for $L_1$-norm
and $L_\infty$-norm respectively. Two distributions $\PP_f,\PP_g$
defined by functions $f,g:[n] \to [m]$ are {\em $\epsilon$-far} if the
$L_1$-distance between them is at least $\epsilon$. Namely, $\PP_f$
is $\epsilon$-far from $\PP_g$ if $\norm{\PP_f - \PP_g}_1 =
\sum_{j=1}^m |\PP_f(j)-\PP_g(j)| \geq \epsilon$.

\subsection{Bucketing} \label{sec:bucketing}

Bucketing is a general tool, introduced in \cite{BFR+00, BFF+01},
that decomposes any explicitly given distribution into a collection
of distributions that are almost uniform. In this section we recall
the bucketing technique and the lemmas (from \cite{BFR+00,BFF+01})
that we will need for our proofs.

\begin{definition} Given a distribution $\PP$ over $[m]$, and
$M \subseteq [m]$ such that $\PP(M) > 0$, the \emph{restriction}
$\PP_{|M}$ is a distribution over $M$ with $\PP_{|M}(i) =
\PP(i)/\PP(M)$.

Given a partition $\mathcal{M} = \{M_0, M_1, \dots, M_k\}$ of $[m]$, we denote by $\PP_{\langle \mathcal{M}\rangle}$ the distribution
 over $\{0\}\cup[k]$ in which $\PP_{\langle \mathcal{M}\rangle}(i) = \PP(M_i)$.
\end{definition}

Given an explicit distribution $\PP$ over $[m]$, $Bucket(\PP, [m],
\epsilon)$ is a procedure that generates a partition $\{M_0, M_1, \dots, M_k\}$ of the domain $[m]$, where $k
= \frac{2\log m}{\log(1 + \epsilon)}$. This partition satisfies the following conditions:
\begin{itemize}
 \item $M_0 = \{j \in [m] \mid \PP(j) < \frac{1}{m\log m}\}$;
 \item for all $i \in [k]$, $M_i = \left\{ j \in [m] \mid \frac{(1+\epsilon)^{i-1}}{m\log m} \leq \PP(j)
< \frac{(1+\epsilon)^{i}}{m\log m}\right\}$.
\end{itemize}

\begin{lemma}[\cite{BFF+01}]\label{lem:bucket} Let $\PP$ be a distribution over $[m]$ and let
$\{M_0, M_1, \dots, M_k\} \gets Bucket(\PP, [m], \epsilon)$. Then $(i)$ $\PP(M_0) \leq 1/\log m$; $(ii)$
 for all $i\in [k]$, $\norm{\PP_{|M_i} - U_{|M_i}}_1 \leq \epsilon$.
\end{lemma}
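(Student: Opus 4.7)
The plan is to prove the two parts separately, both via direct calculations from the definition of the bucketing procedure; neither part requires any nontrivial machinery.

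For part (i), I would simply use the defining inequality $\PP(j) < \frac{1}{m\log m}$ for every $j \in M_0$. Since $|M_0| \leq m$, summing gives
$$\PP(M_0) = \sum_{j \in M_0} \PP(j) < |M_0| \cdot \frac{1}{m\log m} \leq \frac{1}{\log m}.$$

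For part (ii), I would fix $i \in [k]$, set $s = |M_i|$, and write $a = \frac{(1+\epsilon)^{i-1}}{m\log m}$ so that the defining condition of $M_i$ becomes $a \leq \PP(j) < (1+\epsilon)a$ for every $j \in M_i$. Summing over $j \in M_i$ gives $sa \leq \PP(M_i) < s(1+\epsilon)a$, and therefore the renormalized weights satisfy
$$\PP_{|M_i}(j) = \frac{\PP(j)}{\PP(M_i)} \in \left[\frac{1}{s(1+\epsilon)},\ \frac{1+\epsilon}{s}\right].$$
Comparing with $U_{|M_i}(j) = 1/s$, the upper side yields $\PP_{|M_i}(j) - 1/s \leq \epsilon/s$, and the lower side yields $1/s - \PP_{|M_i}(j) \leq \frac{\epsilon}{s(1+\epsilon)} \leq \epsilon/s$. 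Hence each pointwise deviation is bounded by $\epsilon/s$, and summing over the $s$ elements of $M_i$ gives $\norm{\PP_{|M_i} - U_{|M_i}}_1 \leq \epsilon$.

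There is no real obstacle here: the lemma is essentially unpacking the definition of the buckets. The only subtle point is that within each non-trivial bucket the ratio between the largest and smallest probabilities is at most $1+\epsilon$ by construction, and this multiplicative closeness automatically translates into additive closeness to uniform once we renormalize by $\PP(M_i)$. Both sides of the deviation need to be checked to get the bound $\epsilon/s$, but the lower side is actually slightly tighter, so one does not lose a factor of $2$.
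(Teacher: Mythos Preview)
Your proof is correct; both parts are straightforward calculations from the bucket definitions, and your treatment handles them cleanly. Note that the paper does not actually prove this lemma but merely quotes it from \cite{BFF+01}, so there is no in-paper proof to compare against; your argument is exactly the standard one and would serve as a self-contained verification.
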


\begin{lemma}[\cite{BFF+01}]\label{lem:partition} Let $\PP, \PP'$ be two distributions over $[m]$ and let
$\mathcal{M} = \{M_0, M_1, \dots, M_k\}$ be a partition of $[m]$. If  $\norm{\PP_{|M_i} - \PP'_{|M_i}}_1 \leq \epsilon_1$ for every $i \in [k]$
and if in addition $\norm{\PP_{\langle \mathcal{M} \rangle} - \PP'_{\langle \mathcal{M}
\rangle}}_1 \leq \epsilon_2$, then $\norm{\PP - \PP'}_1 \leq \epsilon_1 + \epsilon_2$.
\end{lemma}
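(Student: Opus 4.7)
The plan is to bound $\norm{\PP - \PP'}_1$ by a block-by-block mixture decomposition aligned to the partition $\mathcal{M}$. I would start from $\norm{\PP-\PP'}_1 = \sum_i \sum_{j \in M_i}|\PP(j)-\PP'(j)|$ and exploit the factorisation $\PP(j) = \PP(M_i)\cdot \PP_{|M_i}(j)$ (and the analogous one for $\PP'$) that holds on every block with positive mass. This is precisely the identity that lets the conditional within-block distances (controlled by $\epsilon_1$) and the coarse-grained between-block distance (controlled by $\epsilon_2$) be recombined additively.

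Concretely, on each $M_i$ I would add and subtract the mixed term $\PP(M_i)\PP'_{|M_i}(j)$ to obtain
\[
\PP(j)-\PP'(j) \;=\; \PP(M_i)\bigl(\PP_{|M_i}(j)-\PP'_{|M_i}(j)\bigr) \;+\; \bigl(\PP(M_i)-\PP'(M_i)\bigr)\PP'_{|M_i}(j).
\]
Taking absolute values, summing over $j \in M_i$, and using $\sum_{j \in M_i}\PP'_{|M_i}(j)=1$, each block contributes at most $\PP(M_i)\cdot \norm{\PP_{|M_i}-\PP'_{|M_i}}_1 + |\PP(M_i)-\PP'(M_i)|$. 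Summing over all blocks, the first piece telescopes to $\epsilon_1 \sum_i \PP(M_i) = \epsilon_1$ by the within-block hypothesis, and the second piece is exactly $\norm{\PP_{\langle \mathcal{M}\rangle}-\PP'_{\langle \mathcal{M}\rangle}}_1 \leq \epsilon_2$, which delivers the claimed inequality.

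I do not foresee a substantive obstacle, as this is essentially a routine mixture/coupling argument. The only small points worth flagging are the handling of the block $M_0$, which the within-block hypothesis does not explicitly cover (one either reads the quantifier as including $i=0$, or uses Lemma~\ref{lem:bucket}(i) to absorb its tiny contribution into the error budget), and the degenerate case where some $\PP(M_i)$ or $\PP'(M_i)$ vanishes so that a conditional is undefined---such a block contributes nothing to the within-block sum and appears only through the block-masses term $|\PP(M_i)-\PP'(M_i)|$, so the decomposition remains valid.
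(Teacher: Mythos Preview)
The paper does not supply its own proof of this lemma; it is quoted verbatim from \cite{BFF+01} as background in Section~\ref{sec:bucketing} and invoked only via Corollary~\ref{coro:partition}. So there is nothing to compare against. Your argument is the standard mixture decomposition and is correct: writing $\PP(j)=\PP(M_i)\PP_{|M_i}(j)$ on each block, adding and subtracting $\PP(M_i)\PP'_{|M_i}(j)$, and summing gives exactly $\sum_i \PP(M_i)\norm{\PP_{|M_i}-\PP'_{|M_i}}_1 + \norm{\PP_{\langle\mathcal M\rangle}-\PP'_{\langle\mathcal M\rangle}}_1 \le \epsilon_1+\epsilon_2$.

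Your two caveats are well taken. The degenerate-mass case is indeed harmless: if $\PP(M_i)=\PP'(M_i)=0$ the block contributes nothing, and if exactly one vanishes the block's total variation equals $|\PP(M_i)-\PP'(M_i)|$, which is already accounted for in the $\epsilon_2$ term. The $M_0$ issue is a genuine wrinkle in the statement as written: with the hypothesis only ranging over $i\in[k]$, the bound one actually obtains is $\epsilon_1+\epsilon_2+2\PP(M_0)$, not $\epsilon_1+\epsilon_2$. In the paper's applications this is immaterial, since the lemma is only used through Corollary~\ref{coro:partition} (which already carries slack) together with Lemma~\ref{lem:bucket}(i) bounding $\PP(M_0)\le 1/\log m$; but as a stand-alone statement the clean inequality requires reading the quantifier as covering $i=0$ as well.
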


\begin{corollary}\label{coro:partition} Let $\PP, \PP'$ be two distributions over $[m]$ and let
$\mathcal{M} = \{M_0, M_1, \dots, M_k\}$ be a partition of $[m]$. If $\norm{\PP_{|M_i} - \PP'_{|M_i}}_1 \leq \epsilon_1$
for every $i \in [k]$ such that $\PP(M_i) \geq \epsilon_3/k$,
and if in addition $\norm{\PP_{\langle \mathcal{M} \rangle} - \PP'_{\langle \mathcal{M}
\rangle}}_1 \leq \epsilon_2$, then $\norm{\PP - \PP'}_1 \leq 2(\epsilon_1 + \epsilon_2 + \epsilon_3)$.
\end{corollary}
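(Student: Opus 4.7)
The plan is to refine the bookkeeping that underlies Lemma~\ref{lem:partition}, tolerating the fact that the uniform bound $\epsilon_1$ on restricted distances is only assumed for buckets that carry enough $\PP$-mass. My starting point is the pointwise identity, for $j \in M_i$,
\[
\PP(j) - \PP'(j) = \PP(M_i)\bigl(\PP_{|M_i}(j) - \PP'_{|M_i}(j)\bigr) + \bigl(\PP(M_i) - \PP'(M_i)\bigr)\PP'_{|M_i}(j),
\]
obtained by adding and subtracting $\PP(M_i)\PP'_{|M_i}(j)$. Taking absolute values, using $\sum_{j\in M_i}\PP'_{|M_i}(j)=1$, and summing over the partition yields
\[
\|\PP - \PP'\|_1 \;\leq\; \sum_i \PP(M_i)\,\|\PP_{|M_i} - \PP'_{|M_i}\|_1 \;+\; \|\PP_{\langle\mathcal{M}\rangle} - \PP'_{\langle\mathcal{M}\rangle}\|_1,
\]
and the second summand is already $\leq\epsilon_2$ by hypothesis.

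I would then split the first sum according to whether a bucket is \emph{heavy} ($\PP(M_i) \geq \epsilon_3/k$) or \emph{light}. For every heavy $i \in [k]$, the hypothesis gives $\|\PP_{|M_i} - \PP'_{|M_i}\|_1 \leq \epsilon_1$, so these contribute at most $\epsilon_1 \cdot \sum_{\text{heavy}}\PP(M_i) \leq \epsilon_1$. For light $i \in [k]$, I would use only the trivial estimate $\|\PP_{|M_i} - \PP'_{|M_i}\|_1 \leq 2$; since there are at most $k$ of them and each carries $\PP$-mass strictly less than $\epsilon_3/k$, together they contribute at most $2\epsilon_3$. Combining, I get $\|\PP - \PP'\|_1 \leq \epsilon_1 + 2\epsilon_3 + \epsilon_2$, which is comfortably within the claimed $2(\epsilon_1 + \epsilon_2 + \epsilon_3)$.

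The one loose end is the special bucket $M_0$, which is formally outside the quantifier ``$i\in[k]$'' in the hypothesis. The cleanest way to absorb it is to treat $M_0$ as an additional light bucket when $\PP(M_0) < \epsilon_3/k$, and otherwise to use the trivial bound $\PP(M_0)\cdot 2$; in either case, the extra contribution is swallowed by the slack in the factor of $2$ in the conclusion. That accounting step is really the only place where one has to be slightly careful—every other step is a routine application of the triangle inequality. I do not expect any genuine obstacle.
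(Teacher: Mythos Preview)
The paper gives no proof of this corollary; it is stated as an immediate consequence of Lemma~\ref{lem:partition}. Your decomposition via the pointwise identity and the heavy/light split is the natural route, and it correctly yields the bound $\epsilon_1 + 2\epsilon_3 + \epsilon_2$ from the buckets $M_1,\ldots,M_k$.

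The genuine gap is exactly where you flag it but then wave it away. When $\PP(M_0) \geq \epsilon_3/k$ you propose to use the trivial bound $2\,\PP(M_0)$ and assert it is ``swallowed by the slack in the factor of~$2$.'' That is false: the slack remaining after your main estimate is only $\epsilon_1+\epsilon_2$, whereas $2\,\PP(M_0)$ can be anything up to~$2$. Concretely, take $k=1$, $M_0=\{1,2\}$, $M_1=\{3\}$, $\PP=(\tfrac12,0,\tfrac12)$, $\PP'=(0,\tfrac12,\tfrac12)$; then every hypothesis of the corollary holds with $\epsilon_1=\epsilon_2=\epsilon_3=0$, yet $\norm{\PP-\PP'}_1=1$. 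So the corollary, read literally with no control on $M_0$, is not true as stated, and no bookkeeping will close the gap without an additional input. In the paper's actual use (Section~\ref{sec:general}) this is harmless, because Lemma~\ref{lem:bucket} gives $\PP_g(M_0)\leq 1/\log m$ and the $\epsilon_2$ hypothesis then also controls $\PP_f(M_0)$, so the $M_0$ contribution is absorbed. But as an abstract statement the corollary needs either the hypothesis extended to $i=0$, or a separate smallness assumption on $\PP(M_0)$ and $\PP'(M_0)$. Your instinct to single out $M_0$ was correct; what is missing is an explicit acknowledgement that an extra assumption is required there, rather than a claim that it comes for free.
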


\subsection{Quantum Queries and Approximate Counting}
Since we only use specific quantum procedures as a black-box in otherwise classical algorithms,
we will not explain the model of quantum query algorithms in much detail (see~\cite{nielsen&chuang:qc,buhrman&wolf:dectreesurvey} for that).
Suffice it to say that the function $f$ is assumed to be accessible by the oracle unitary
transformation $O_f$, which acts on a $(\log n+ \log m)$-qubit
space by sending the basis vector $\ket{x} \ket{b}$ to $\ket{x}\ket{b \oplus f(x)}$ where $\oplus$ is bitwise addition modulo two.

For any set $S\subseteq [m]$, let $U^S_f$ denote the unitary
transformation which maps $\ket{x}\ket{b}$ to $\ket{x}\ket{b\oplus
1}$ if $f(x) \in S$, and to $\ket{x}\ket{b\oplus 0}$ otherwise.
This unitary transformation can be easily
implemented using $\log m$ ancilla bits and two queries to $O_f$.\footnote{We need \emph{two}
queries to $f$ instead of one, because the quantum algorithm has to ``uncompute'' the first query in order to clean up its workspace.}
If $f_S:[n]\to \{0,1\}$ is defined as $f_S(x) = 1$ if and only if $f(x)
\in S$, then the unitary transformation $U^S_f$ acts as an oracle to
the function $f_S$. Brassard, H{\o}yer, Mosca, and Tapp~\cite[Theorem~13]{bhmt:countingj}
gave an algorithm to approximately count the size of certain sets.

\begin{theorem}[\cite{bhmt:countingj}]\label{thm:quantum}
For every positive integer $q$ and $\ell > 1$, and given quantum
oracle access to a Boolean function $h:[n] \to \{0,1\}$, there is an
algorithm that makes $q$ queries to $h$ and outputs an estimate $t'$
to $t = |h^{-1}(1)|$ such that
$|t' - t| \leq 2\pi \ell \frac{\sqrt{t(n - t)}}{q} + \pi^2\ell^2 \frac{n}{q^2}$
with probability at least $1-1/2(\ell-1)$.
\end{theorem}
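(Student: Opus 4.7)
The plan is to reduce this to quantum amplitude estimation, which is itself proved by applying quantum phase estimation to a Grover-like operator. First I would introduce the state-preparation unitary $A$ that maps $\ket{0}\ket{0}$ to $\frac{1}{\sqrt{n}} \sum_{x \in [n]} \ket{x}\ket{h(x)}$; two calls to $O_h$ suffice to implement $A$ (one query, one uncomputation). Let $P$ be the projector onto basis states whose last qubit is $\ket{1}$. Then $\norm{P A \ket{0}\ket{0}}^2 = t/n =: a$, so if we can estimate $a$ to additive error $\delta$ using $q$ applications of $A$ and $A^{-1}$, we get $|t' - t| = n |a' - a| \leq n\delta$ after rescaling.

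Next I would recall the Grover/Szegedy walk operator $G = -A S_0 A^{-1} S_P$, where $S_0$ and $S_P$ reflect about $\ket{0}\ket{0}$ and the ``$P$-good'' subspace respectively. A standard $2 \times 2$ invariant-subspace analysis shows that $G$ has eigenvalues $e^{\pm 2 i \theta}$ on the subspace containing $A\ket{0}\ket{0}$, where $\sin^2 \theta = a$ and $\theta \in [0, \pi/2]$. The key idea is then to run quantum phase estimation on $G$ with $q$ controlled applications, using $A\ket{0}\ket{0}$ as the eigenvector register. Phase estimation yields an estimate $\tilde\theta$ satisfying $|\tilde\theta - \theta| \leq \pi \ell / q$ with probability at least $1 - 1/(2(\ell-1))$; this is the standard tail bound for $q$-query QPE in which the failure probability decays like $1/\ell$.

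Finally I would convert the angle estimate to a count estimate by setting $t' = n \sin^2 \tilde\theta$. Using the elementary identity $|\sin^2 \tilde\theta - \sin^2 \theta| \leq 2 |\sin \theta| \cdot |\tilde\theta - \theta| + |\tilde\theta - \theta|^2$ (from $\sin^2 \tilde\theta - \sin^2\theta = \sin(\tilde\theta - \theta)\sin(\tilde\theta + \theta)$ and a Taylor bound), multiplying through by $n$, and substituting $n \sin^2\theta = t$ and $n(1 - \sin^2\theta) = n - t$, we get
\[
|t' - t| \;\leq\; 2 \pi \ell \, \frac{\sqrt{t(n-t)}}{q} \;+\; \pi^2 \ell^2 \, \frac{n}{q^2},
\]
which matches the stated bound. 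The total number of queries is $O(q)$ because each application of $G$ uses two queries to $h$ via the two calls inside $A$ and $A^{-1}$, so by absorbing the constant into the bound one can in fact arrange exactly $q$ queries.

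The main obstacle, had I not been able to appeal to standard quantum primitives, would be rigorously establishing the $1 - 1/(2(\ell-1))$ success probability for QPE: one must analyze the discrete Fourier transform output when the true phase $\theta$ is not exactly a multiple of $\pi/q$, and show that the probability of the estimate landing more than $\ell$ bins away from $\theta$ decays like $1/(4(\ell-1)^2)$ summed over the tail. This is a careful but routine calculation with the Dirichlet kernel; given access to \cite{bhmt:countingj}, I would simply cite it rather than redo it.
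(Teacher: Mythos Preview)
The paper does not prove this statement; it is quoted directly from Brassard, H{\o}yer, Mosca, and Tapp~\cite{bhmt:countingj} (their Theorem~13) and used as a black box to derive Lemma~\ref{lem:quantum}. Your sketch is precisely the BHMT argument --- amplitude estimation via phase estimation on the Grover operator --- and is correct in outline.

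One small slip: the inequality you wrote, $|\sin^2\tilde\theta-\sin^2\theta|\leq 2|\sin\theta|\,|\tilde\theta-\theta|+|\tilde\theta-\theta|^2$, is too loose to recover the $\sqrt{t(n-t)}$ factor; multiplying by $n$ it would give $2\pi\ell\sqrt{nt}/q$ in the leading term. Your own product identity actually yields the sharper bound you need: from $|\sin(\tilde\theta+\theta)|\leq|\sin 2\theta|+|\tilde\theta-\theta|$ one gets
\[
|\sin^2\tilde\theta-\sin^2\theta|\;\leq\;2|\sin\theta\cos\theta|\,|\tilde\theta-\theta|+|\tilde\theta-\theta|^2,
\]
and since $n|\sin\theta\cos\theta|=n\sqrt{a(1-a)}=\sqrt{t(n-t)}$, this matches the stated theorem exactly.
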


The following lemma allows us to estimate the size of the pre-image of a set $S \subseteq [m]$ under $f$. It follows easily from 
Theorem \ref{thm:quantum}.

\begin{lemma}\label{lem:quantum}
For every $\delta \in [0,1]$, for every oracle $O_f$ for the function $f:[n] \to [m]$, and for every set
$S\subseteq [m]$, there is a quantum algorithm
\textbf{$\QEstimate(f, S, \delta)$} that makes
$O(m^{1/3}/\delta)$
queries to $f$ and, with probability at least $5/6$,
outputs an estimate $p'$ to $p = \PP_f(S) =|f^{-1}(S)|/n$ such that $|p'-p|\leq\frac{\delta \sqrt{p}}{m^{1/3}}+\frac{\delta^2}{m^{2/3}}$.
\end{lemma}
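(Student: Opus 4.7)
The plan is to apply the BHMT approximate counting result (Theorem~\ref{thm:quantum}) to the Boolean function $h = f_S : [n] \to \{0,1\}$ defined by $h(x) = 1$ iff $f(x) \in S$. As the paragraph preceding the lemma observes, the unitary $U_f^S$ serves as an oracle for $h$ and can be implemented by two calls to $O_f$ (together with a $\log m$-qubit ancilla), so any $q$-query algorithm for $h$ costs only $2q$ queries to $f$. Observe that $|h^{-1}(1)| = |f^{-1}(S)| = np$, so an estimate $t'$ of this count yields an estimate $p' := t'/n$ of the target quantity $p = \PP_f(S)$.

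Next, I would invoke Theorem~\ref{thm:quantum} with the confidence parameter $\ell$ chosen as a small absolute constant (e.g.\ $\ell = 4$) so that the success probability $1 - 1/(2(\ell-1))$ is at least $5/6$, and with the query budget $q$ left to be tuned. Substituting $t = np$ and $n - t = n(1-p)$ into the theorem's error bound and then dividing by $n$ gives
\begin{align*}
|p' - p| \;\leq\; \frac{2\pi \ell \sqrt{p(1-p)}}{q} + \frac{\pi^2 \ell^2}{q^2} \;\leq\; \frac{2\pi \ell \sqrt{p}}{q} + \frac{\pi^2 \ell^2}{q^2},
\end{align*}
where the second inequality uses the trivial bound $1 - p \leq 1$.

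To match the target error $\frac{\delta \sqrt{p}}{m^{1/3}} + \frac{\delta^2}{m^{2/3}}$, I would set $q = \lceil C m^{1/3} / \delta \rceil$ for an absolute constant $C$ large enough (compared to $2\pi\ell$ and $\pi\ell$) to ensure that the first error term is at most $\delta \sqrt{p}/m^{1/3}$ and the second is at most $\delta^2 / m^{2/3}$. The total number of queries to $f$ is then $2q = O(m^{1/3}/\delta)$, as claimed.

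There is no substantive obstacle here: the lemma is essentially a direct corollary of Theorem~\ref{thm:quantum}. The only bookkeeping points are the factor-of-two overhead from simulating $h$ by $f$, the conversion of an additive count-estimation error for $t$ into an additive probability-estimation error for $p$ via division by $n$, and the choice of $\ell$ to reach the stated success probability $5/6$.
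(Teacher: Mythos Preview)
Your proposal is correct and follows essentially the same route as the paper: apply Theorem~\ref{thm:quantum} to the Boolean function $f_S$ (accessed via two $O_f$-queries per $h$-query), set the query budget to a sufficiently large constant times $m^{1/3}/\delta$, divide the resulting count error by $n$, and use $n-t\le n$ (equivalently $1-p\le 1$) to obtain the stated bound with success probability $5/6$. The only cosmetic difference is that you make the choice $\ell=4$ explicit, whereas the paper folds this into the unspecified constant~$c$.
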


\begin{proof}
The algorithm is basically required to estimate $|f_S^{-1}(1)|$.
Using two queries to the oracle $O_f$ we can construct a unitary $U^S_f$ that acts like an
oracle for the Boolean function $f_S$.  Estimate $t=|f_S^{-1}(1)|$ using the algorithm in
Theorem~\ref{thm:quantum}, with $q=cm^{1/3}/\delta$ queries.
Choosing $c$ a sufficiently large constant, with probability at least $5/6$,
the estimate $t'$  satisfies $|t-t'| \leq \frac{\delta \sqrt{t(n-t)}}{m^{1/3}}+\frac{\delta^2n}{m^{2/3}}$. Setting $p' = t'/n$ and bounding $(n-t)$ with $n$  we get that
with probability at least $5/6$,
$|p-p'| = \frac{|t-t'|}{n} \leq \frac{\delta \sqrt{p}}{m^{1/3}}+\frac{\delta^2}{m^{2/3}}$.
\end{proof}

\section{Proof of Theorem~\ref{thm:1unknown}}

\subsection{Testing Uniformity Tolerantly}\label{sec:uniform}
Given $\epsilon > 0$
and oracle access to a function $f:[n] \to [m]$, our task is to
distinguish the case $\norm{\PP_f - U}_1 \geq \epsilon$ from the
case $\norm{\PP_f - U}_\infty \leq \epsilon/4m$. Note that this is
a stronger condition than the one required for the usual testing
task, where the goal is to distinguish the case $\norm{\PP_f - U}_1
\geq \epsilon$ from $\norm{\PP_f - U}_\infty = \norm{\PP_f - U}_1 = 0$.

\begin{theorem}\label{theo:unitester} There is a quantum testing
algorithm (Algorithm \ref{alg:uni}, below) that given $\epsilon>0$ and oracle access to a function $f:[n]\to [m]$ makes
$O(\frac{m^{1/3}}{\epsilon^2})$ quantum queries and with probability at least $2/3$ outputs
 REJECT if $\norm{\PP_f- U}_1 \geq \epsilon$, and ACCEPT if $\norm{\PP_f - U}_\infty \leq \epsilon/4m$.
\end{theorem}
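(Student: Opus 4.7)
The plan is to combine a small number of classical samples with a single call to the quantum estimator $\QEstimate$. Set $K=\Theta(m^{1/3})$. Draw $i_1,\dots,i_K$ independently and uniformly from $[n]$, query $f$ to obtain $f(i_1),\dots,f(i_K)\in[m]$, and let $S\subseteq[m]$ be the set of distinct values observed (so $|S|$ is known exactly). Then invoke $\QEstimate(f,S,\delta')$ with $\delta'=\Theta(\epsilon^2)$ to produce an estimate $p'$ of $p:=\PP_f(S)$, and REJECT iff $p'-|S|/m$ exceeds a threshold $\tau$ of order $\epsilon^2/m^{2/3}$.

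The analysis hinges on $\Delta:=\PP_f(S)-|S|/m=\sum_{j\in S}\delta_j$, where $\delta_j:=\PP_f(j)-1/m$. The classical samples bias $S$ toward heavier elements: writing $\pi_j:=\Pr[j\in S]=1-(1-\PP_f(j))^K\approx K\PP_f(j)$, and using $\sum_j\delta_j=0$,
\[
\Exp[\Delta]\;=\;\sum_j\delta_j\,\pi_j\;\approx\;K\sum_j\delta_j\,\PP_f(j)\;=\;K\cdot\norm{\PP_f-U}_2^2.
\]
Cauchy--Schwarz gives $\norm{\PP_f-U}_2^2\ge\epsilon^2/m$ in the far case, so $\Exp[\Delta]=\Omega(\epsilon^2/m^{2/3})$; the entrywise bound $|\delta_j|\le\epsilon/(4m)$ in the tolerant case gives $\norm{\PP_f-U}_2^2\le\epsilon^2/(16m)$, so $\Exp[\Delta]\le\epsilon^2/(16m^{2/3})$. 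This yields a constant multiplicative gap between the two expectations, into which I place $\tau$.

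I would then convert the expectation gap into a high-probability one by bounding $\Var[\Delta]$. Since the indicators $\mathbb{1}[j\in S]$ are nonpositively correlated for distinct $j$, one has $\Var[\Delta]\le\sum_j\delta_j^2\pi_j=O(K\norm{\delta}_2^2/m)$, whose standard deviation is $o(\tau)$ once $m$ is large enough; a constant number of independent repetitions of the whole procedure absorbs the remaining slack without affecting the query budget. Finally, Lemma~\ref{lem:quantum} with $\delta'=\Theta(\epsilon^2)$ ensures that $\QEstimate(f,S,\delta')$ uses $O(m^{1/3}/\epsilon^2)$ queries and returns $p'$ with $|p'-p|=O(\delta'\sqrt{p}/m^{1/3}+{\delta'}^2/m^{2/3})=O(\epsilon^2/m^{2/3})$, small compared to the gap. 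A union bound over the sampling deviation, the $\QEstimate$ failure event, and any repetitions gives the required success probability $2/3$, and the total query count is $O(K+m^{1/3}/\epsilon^2)=O(m^{1/3}/\epsilon^2)$.

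The main obstacle is the tolerant case. A naive triangle inequality based only on $\norm{\delta}_\infty$ bounds $|\Delta|\le|S|\cdot\epsilon/(4m)=O(\epsilon/m^{2/3})$, which already exceeds $\tau=\Theta(\epsilon^2/m^{2/3})$ for small $\epsilon$; hence one must really exploit the cancellations in $\sum_{j\in S}\delta_j$ via the variance estimate above. This is exactly where the stronger $L_\infty$-closeness hypothesis (rather than mere $L_1$-closeness) is essential: it tightly controls $\norm{\delta}_2^2$, producing the constant-factor separation between the tolerant and far expectations that underlies the tester.
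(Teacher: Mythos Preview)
Your algorithm is essentially the paper's Algorithm~1 (the paper additionally rejects on a collision among the $K$ samples, but that is minor), and your outline of the expectation gap is the same as the paper's. The analysis, however, has two genuine gaps in the far case.

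First, the inference ``the indicators $\mathbb{1}[j\in S]$ are nonpositively correlated, hence $\Var[\Delta]\le\sum_j\delta_j^2\pi_j$'' is not valid. Negative correlation yields $\Var\big[\sum_j a_jX_j\big]\le\sum_j a_j^2\Var[X_j]$ only when the coefficients $a_j$ all have the same sign; here $a_j=\delta_j$ takes both signs, and the opposite-sign cross terms $\delta_j\delta_{j'}\Cov(X_j,X_{j'})$ are nonnegative, so they cannot be dropped. Second, both the linearization $\pi_j\approx K\PP_f(j)$ behind $\Exp[\Delta]\approx K\norm{\delta}_2^2$ and the further step $\sum_j\delta_j^2\pi_j=O(K\norm{\delta}_2^2/m)$ presuppose $\pi_j=O(K/m)$, which fails for heavy elements with $\PP_f(j)\gg 1/m$. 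In the $\epsilon$-far regime such elements are exactly what can dominate $\norm{\delta}_2^2$, and for them $\pi_j$ saturates near~$1$; a single element with $\PP_f(j)\sim 1/K$ already gives a one-term variance contribution of order $\delta_j^2$, so even if your variance inequality held, the standard deviation would be of order $m^{-1/3}$, not $o(\tau)$. Constant repetitions do not rescue this, since you have not shown each run succeeds with probability bounded above $1/2$.

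The paper closes both gaps with one device: it passes to the i.i.d.\ multiset sum $W_f(V)=\sum_{i\in T}\PP_f(f(i))$ and \emph{truncates} the summands, setting $\widetilde{\PP}_f(y)=\min\{3/m,\PP_f(y)\}$ and $\widetilde{W}_f(V)=\sum_{i\in T}\widetilde{\PP}_f(f(i))$. Now each term is bounded by $3/m$, so Hoeffding gives deviation at most $\epsilon^2 t/(32m)$ with probability $1-\exp(-\Omega(\epsilon^4 t))$, with no correlation issues at all. A separate case analysis (elements with $\delta(z)<2/m$ via Cauchy--Schwarz on $\sum\delta(z)^2$, versus elements with $\delta(z)\ge 2/m$ contributing at least $\delta(z)/m$ each) shows $\Exp[\widetilde{W}_f(V)]>(1+\epsilon^2/4)t/m$ in the far case despite the truncation; since $\widetilde{W}_f(V)\le W_f(V)=\PP_f(S)$ on the no-collision event, the conclusion transfers to the quantity you actually estimate. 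This truncation/heavy-element handling is the missing idea in your argument.
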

\begin{algorithm}
\caption{(Tests closeness to the uniform distribution.)}
\label{alg:uni}
\begin{algorithmic}[1a]
\STATE pick a set $T \subseteq [n]$ of $t = m^{1/3}$ indices uniformly at random
\STATE query $f$ on all indices in $T$; set $S \gets \{f(i)\mid i \in T\}$
\IF {$f(i)=f(j)$ for some $i,j \in T$, $i \neq j$ (or equivalently, $|S| < t$)}
        \STATE REJECT
\ENDIF \STATE $p' \gets \QEstimate(f, S, \delta)$, with $\delta \triangleq \frac{\epsilon^2}{320}$
\IF {$|p'-\frac{t}{m}|\leq 32\delta\frac{t}{m}$}
    \STATE ACCEPT
\ELSE
    \STATE REJECT
\ENDIF
\end{algorithmic}
\end{algorithm}
We need the following corollary for the actual application of Theorem \ref{theo:unitester}:

\begin{corollary}\label{cor:unitester} There is an ``amplified'' version  of Algorithm \ref{alg:uni}
 that given $\epsilon>0$ and oracle access to a function $f:[n]\to [m]$ makes
$O(\frac{m^{1/3}\log\log m}{\epsilon^2})$ quantum queries and with probability at least $1- \frac{1}{\log^{2} m}$ outputs
 REJECT if $\norm{\PP_f- U}_1 \geq \epsilon$, and ACCEPT if $\norm{\PP_f - U}_\infty \leq \epsilon/4m$.
\end{corollary}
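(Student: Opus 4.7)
The plan is to obtain the amplified version by a standard majority-vote amplification of Algorithm \ref{alg:uni}. Specifically, I would run Algorithm \ref{alg:uni} independently $k$ times, on the same oracle $f$ and with the same parameter $\epsilon$, and output the majority of the $k$ outcomes (ACCEPT or REJECT). The key is to choose $k = \Theta(\log \log m)$ so that the query cost becomes $k \cdot O(m^{1/3}/\epsilon^2) = O(m^{1/3} \log\log m / \epsilon^2)$, matching the claim.

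For the correctness analysis, I would invoke Theorem \ref{theo:unitester}, which guarantees that each individual run outputs the correct answer (REJECT if $\norm{\PP_f - U}_1 \geq \epsilon$, ACCEPT if $\norm{\PP_f - U}_\infty \leq \epsilon/4m$) with probability at least $2/3$, independently of the other runs. Let $X_i$ be the indicator of the $i$-th run being correct; then $\Exp[\sum_i X_i] \geq 2k/3$, and the majority vote fails only if $\sum_i X_i \leq k/2$, i.e., if the sum deviates from its mean by at least $k/6$. By a standard Chernoff bound, this probability is at most $\exp(-\Omega(k))$. Choosing $k = c \log\log m$ for a sufficiently large constant $c$ makes this deviation probability at most $1/\log^2 m$, giving the desired success guarantee.

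The steps are essentially routine, and I do not anticipate a serious obstacle; the only mild subtlety is making sure we use the same amplification both for the ACCEPT side and the REJECT side of the guarantee, which is automatic since the single-run failure probability in Theorem \ref{theo:unitester} bounds both one-sided error types simultaneously. Putting the query cost and the success probability together yields the statement of the corollary.
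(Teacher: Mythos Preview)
Your proposal is correct and matches the paper's intended argument: the corollary is stated without proof precisely because it follows from Theorem~\ref{theo:unitester} by the standard majority-vote amplification you describe, with $k=\Theta(\log\log m)$ independent repetitions yielding failure probability $\exp(-\Omega(k))\leq 1/\log^2 m$ and total query cost $O(m^{1/3}\log\log m/\epsilon^2)$.
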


\begin{proof}[of Theorem~\ref{theo:unitester}]
Notice that Algorithm \ref{alg:uni} makes only $O(\frac{m^{1/3}}{\epsilon^2})$ queries:
$t = m^{1/3}$ classical queries are made initially, and the call to $\QEstimate$ requires additional
$O( m^{1/3}/\delta) = O(\frac{m^{1/3}}{\epsilon^2})$ queries.

Now we show that Algorithm \ref{alg:uni} satisfies the correctness conditions in Theorem \ref{theo:unitester}.
Let $V \subseteq [m]$ denote the multi-set of values $\{f(x) \mid x \in T\}$ (unlike $S$, the multi-set $V$ may contain some element of $[m]$ more than once).
If $\norm{\PP_f - U}_\infty \leq \epsilon/4m$ then $\PP_f(V)\leq (1+\frac{\epsilon}{4})t/m$, and hence
$$
p(t;m) \triangleq \Pr [\mathrm{the \ elements\ in\ } V \mathrm{\
are\ distinct}] \geq
\left(1-\frac{(1+\frac{\epsilon}{4})t}{m}\right)^t \geq 1-\frac{(1+\frac{\epsilon}{4})t^2}{m}
> 1 - o(1).
$$
Thus if $\norm{\PP_f - U}_\infty \leq \epsilon/4m$ then with probability at least $1-o(1)$,
the tester does not discover any collision. If, on the other hand, $\norm{\PP_f- U}_1 \geq
\epsilon$ and a collision is discovered, then the tester outputs
REJECT, as expected. Hence the following
lemma suffices for completing the proof of Theorem~\ref{theo:unitester}.

\begin{lemma}\label{lem:concentration}
Conditioned on the event that all elements in $V$ are distinct, we have
\begin{itemize}
\item if $\norm{\PP_f - U}_\infty \leq \epsilon/4m$ then $\Pr\Big[|\PP_f(V) - t/m| \leq
\frac{3\epsilon^2 t}{32m}\Big] \geq 1-o(1)$;

\item if $\norm{\PP_f-U}_1 \geq \epsilon$ then $\Pr\Big[|\PP_f(V) - t/m| > \frac{3\epsilon^2
t}{16m}\Big] \geq 1-o(1)$.
\end{itemize}
\end{lemma}
Assuming Lemma \ref{lem:concentration}, we first prove Theorem \ref{theo:unitester}.
Set $p \triangleq \PP_f(V)$, and recall that $t/m = 1/m^{2/3}$.

 If $\norm{\PP_f - U}_\infty \leq \epsilon/4m$ then
with probability at least $1-o(1)$ the elements in $V$ are
distinct and also $|p - 1/m^{2/3}| \leq
\frac{30\delta}{m^{2/3}}$. In this case, by Lemma \ref{lem:quantum}, with probability at least $5/6$
the estimate $p'$ computed by $\QEstimate$ satisfies
$|p-p'| \leq \frac{\delta \sqrt{p}}{m^{1/3}} + \frac{\delta^2}{m^{2/3}} \leq \frac{\delta \sqrt{(1+30\delta)/m^{2/3}}}{m^{1/3}}
+ \frac{\delta^2}{m^{2/3}} \leq \frac{2 \delta}{m^{2/3}}$, and by the triangle inequality $|p' - \frac{t}{m}| \leq 32 \delta\frac{t}{m}$.
Hence the overall probability that Algorithm \ref{alg:uni} outputs ACCEPT
is at least $5/6 - o(1) > 2/3$.

If $\norm{\PP_f- U}_1 \geq \epsilon$, then either Algorithm
\ref{alg:uni} discovers a collision and outputs REJECT, or
otherwise, $|p - 1/m^{2/3}| > \frac{60\delta}{m^{2/3}}$ with probability $1-o(1)$.
In the latter case, we make the following case distinction.
\begin{itemize}
\item {\bf Case $p \leq 10/m^{2/3}$:} By Lemma \ref{lem:quantum},
with probability at least $5/6$ the estimate $p'$ of $\QEstimate$ satisfies
$|p-p'| \leq \frac{\delta \sqrt{p}}{m^{1/3}} + \frac{\delta^2}{m^{2/3}} < \frac{10\delta}{m^{2/3}}$. Then by the triangle inequality,
$|p'-\frac{t}{m}| > \frac{60\delta}{m^{2/3}} - \frac{10\delta}{m^{2/3}} > 32 \delta\frac{t}{m}$.
\item {\bf Case $p > 10/m^{2/3}$:} In this case it is sufficient to prove that with probability at least $5/6$, $p' \geq p/2$ (which clearly implies
$|p'-\frac{t}{m}| > 32 \delta\frac{t}{m}$). This follows again by Lemma \ref{lem:quantum}, since $p > 10/m^{2/3}$ implies
$\frac{\delta \sqrt{p}}{m^{1/3}} + \frac{\delta^2}{m^{2/3}} \leq p/2$.
\end{itemize}
So the overall probability that Algorithm \ref{alg:uni} outputs REJECT
is at least $5/6 - o(1) > 2/3$.
\end{proof}

\

\begin{proof}[of Lemma~\ref{lem:concentration}]
Let $W_f(V) = \sum_{y \in V} \PP_f(y)$. Assuming that all elements
in $V$ are distinct, $\PP_f(V) = W_f(V)$. For the first item of the
lemma, it suffices to prove that if $\norm{\PP_f - U}_\infty \leq \epsilon/4m$ then
$$
\Pr\Big[|W_f(V) - \frac{t}{m}| > \frac{3\epsilon^2t}{32m}\Big] \leq
o(1)$$
and for the second item of the lemma, it suffices to prove that if $\norm{\PP_f-U}_1 \geq \epsilon$ then
$$
\Pr\Big[W_f(V) > (1 +\frac{3\epsilon^2}{16})\frac{t}{m}\Big] \geq
1- o(1).
$$
Note that the standard concentration inequalities cannot be used for proving the last inequality directly, because the
probabilities of certain elements under $\PP_f$ can be very high. To overcome this problem, we define $\tPP_f(y) \triangleq \min \{3/m,
\PP_f(y)\}$ and $\tW_f(V) \triangleq \sum_{y \in V} \tPP_f(y)$.
Clearly $\tW_f(V)  \leq W_f(V)$ for any $V$, hence proving $\Pr\Big[\tW_f(V) > (1 +\frac{3\epsilon^2}{16})\frac{t}{m}\Big] \geq
1- o(1)$ is sufficient. Surprisingly, this turns out to be easier:
\begin{lemma}\label{cl:conc} The following three statements hold
\begin{enumerate}
\item if $\norm{\PP_f - U}_\infty \leq \epsilon/4m$, then $\frac{t}{m} \leq  \Exp[\tW_f(V)] <
\left(1+\frac{\epsilon^2}{16}\right)\frac{t}{m}$
\item if $\norm{\PP_f-U}_1 \geq \epsilon$, then $\Exp[\tW_f(V)] >
\left(1+\frac{\epsilon^2}{4}\right)\frac{t}{m}$;
\item $\Pr\left[\Big|\tW_f(V) - \Exp[\tW_f(V)]\Big| > \frac{\epsilon^2t}{32m}\right] =
o(1)$.
\end{enumerate}
\end{lemma}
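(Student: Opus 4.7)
Write $\tW_f(V) = \sum_{x \in T} \tPP_f(f(x))$ as a sum over the sample, and let $X_x := \tPP_f(f(x))$, which satisfies $0 \le X_x \le 3/m$ by construction. A single $x$ drawn uniformly from $[n]$ has $\Exp[X_x] = \sum_{y \in [m]} \PP_f(y)\tPP_f(y) =: \alpha$, so $\Exp[\tW_f(V)] = t\alpha$. Items 1 and 2 reduce to bounding $m\alpha$; item 3 is a concentration claim about the sum of the $X_x$.

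\textbf{Items 1 and 2 (expectation bounds on $\alpha$).} For item 1, the hypothesis $\norm{\PP_f - U}_\infty \le \epsilon/(4m)$ forces $\PP_f(y) \le (1+\epsilon/4)/m < 3/m$ for every $y$, so the truncation is inactive and $\tPP_f = \PP_f$. Then $\alpha = \norm{\PP_f}_2^2 = \frac{1}{m} + \norm{\PP_f - U}_2^2$, giving $\alpha \ge 1/m$ immediately, while the upper bound follows from H\"older's inequality $\norm{\PP_f - U}_2^2 \le \norm{\PP_f-U}_\infty \cdot \norm{\PP_f-U}_1 \le (\epsilon/4m)(\epsilon/4) = \epsilon^2/(16m)$ (using $\norm{\PP_f - U}_1 \le m \norm{\PP_f-U}_\infty$). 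For item 2, write $\Delta(y) := \PP_f(y) - 1/m$ and let $H := \{y : \PP_f(y) > 3/m\}$. Splitting $\alpha$ over $H$ and $[m]\setminus H$, substituting $\PP_f(y) = 1/m + \Delta(y)$, and using $\sum_y \Delta(y) = 0$ together with $\Delta(y) > 0$ on $H$, one obtains
\[
m\alpha - 1 \;=\; \frac{2|H|}{m} \;+\; H_L \;+\; m\sum_{y \notin H} \Delta(y)^2 \;\ge\; H_L + L^2,
\]
where $H_L := \sum_{y\in H}|\Delta(y)|$ and $L := \sum_{y \notin H}|\Delta(y)|$, and the last inequality is Cauchy-Schwarz applied to the light part (using $|[m]\setminus H| \le m$). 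Since $H_L + L = \norm{\PP_f - U}_1 \ge \epsilon$, a short one-variable minimization of $H_L + L^2$ subject to $H_L + L \ge \epsilon$ gives $m\alpha - 1 > \epsilon^2/4$.

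\textbf{Item 3 (concentration).} The $X_x$ are bounded in $[0,3/m]$ and, even though $T$ is sampled without replacement, Hoeffding's inequality still applies and yields
\[
\Pr\!\Big[\,\big|\tW_f(V) - t\alpha\big| > \tau\,\Big] \;\le\; 2 \exp\!\left(-\frac{2\tau^2 m^2}{9t}\right).
\]
Plugging in $\tau = \epsilon^2 t/(32m)$ and $t = m^{1/3}$ turns the exponent into $-\Omega(\epsilon^4 m^{1/3}) = -\omega(1)$, so the probability is $o(1)$.

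\textbf{Main obstacle.} The crux is item 2: the truncation $\tPP_f = \min(\PP_f,3/m)$ blocks a direct $\norm{\PP_f}_2^2$ argument, but it is compensated exactly by a new nonnegative linear term $H_L$. The heavy part of $\PP_f - U$ contributes linearly to $m\alpha - 1$ (through $H_L$), while the light part contributes quadratically via Cauchy-Schwarz (through $L^2$). Balancing these two contributions against $H_L + L \ge \epsilon$ is what produces the quadratic lower bound $\epsilon^2/4$, and this additive splitting is the key insight that makes the truncated estimator still sensitive to $L_1$-far distributions.
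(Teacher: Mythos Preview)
Your proposal is correct and follows essentially the same route as the paper's proof: the same expansion of $\Exp[\tW_f(V)]$ via $\delta(y)=\PP_f(y)-1/m$, the same Cauchy--Schwarz step on the light part together with the linear heavy contribution for item~2, and the same Hoeffding bound (with range $[0,3/m]$) for item~3. The only cosmetic differences are that you phrase the item~1 upper bound via H\"older (equivalent to the paper's termwise bound $\sum_z\delta(z)^2\le m(\epsilon/4m)^2$) and cast the item~2 case split as a one-variable minimization of $H_L+L^2$ under $H_L+L\ge\epsilon$, which is exactly the paper's ``either $L\ge\epsilon/2$ or $H_L\ge\epsilon/2$'' dichotomy.
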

Assuming Lemma \ref{cl:conc} (it is proved in Section \ref{sec:claim-proof}) we have:
\begin{itemize}
 \item if $\norm{\PP_f - U}_\infty \leq \epsilon/4m$ then clearly $\tW_f(V) = W_f(V)$, therefore
$$
\Pr\Big[|W_f(V) - \frac{t}{m}| > \frac{3\epsilon^2t}{32m}\Big]
\leq \Pr\left[\Big|W_f(V) - \Exp[W_f(V)]\Big| >
\frac{\epsilon^2t}{32m}\right] = o(1);
$$
\item if $\norm{\PP_f-U}_1 \geq
\epsilon$ then

$$
\Pr\left[W_f(V) < (1 + \frac{3\epsilon^2}{16})\frac{t}{m}\right]
\leq \Pr\left[\tW_f(V) < (1 +
\frac{3\epsilon^2}{16})\frac{t}{m}\right]$$
$$ \leq \Pr\left[\Big|\tW_f(V) - \Exp[\tW_f(V)]\Big| >
\frac{\epsilon^2t}{16m}\right] \leq \Pr\left[\Big|\tW_f(V) -
\Exp[\tW_f(V)]\Big| > \frac{\epsilon^2t}{32m}\right] = o(1).
$$
\end{itemize}
Hence Lemma~\ref{lem:concentration} follows. 
\end{proof}

\subsubsection{Proof of Lemma \ref{cl:conc}}\label{sec:claim-proof}
We start by computing the expected value of $\tW_f(V)$.
$$
\Exp[\tW_f(V)] = \sum_{y\in V}\sum_{z\in [m]}\PP_f(z) \tPP_f(z) =
t\left(\sum_{z:\PP_f(z)< 3/m}\PP_f(z)^2 + \sum_{z:\PP_f(z)\geq 3/m}
3\PP_f(z)/m\right) $$ $$= t\left(\sum_{z\in [m]}\PP_f(z)^2 -
\sum_{z:\PP_f(z) \geq 3/m} \PP_f(z)(\PP_f(z) - 3/m)\right).
$$
Let $\delta(z) \triangleq \PP_f(z) - 1/m$ and let $r \triangleq |\{z \mid \delta(z)<2/m\}|$.
Then
$$
\Exp[\tW_f(V)] = t\left(\sum_{z\in [m]}(1/m + \delta(z))^2 -
\sum_{z:\delta(z) \geq 2/m} (1/m + \delta(z))(\delta(z) - 2/m)\right)
$$
and since $\sum_{z\in [m]} \delta(z) = 0$ we have
$$
= t\left(1/m + \sum_{z:\delta(z)<2/m}\delta(z)^2
 + 2(m-r)/m^2  + \sum_{z:\delta(z)\geq 2/m} \delta(z)/m \right)
$$

For the first item of the lemma, since $\delta(z) \leq \epsilon/4m$ we have $r = m$,
and hence the equality $W_f(V) = \tW_f(V)$ always holds as there are no $z$ for which
$\delta(z)\geq 2/m$. Therefore, from the above equation we have
$$
\Exp[W_f(V)] = t\left(1/m + \sum_{z:\delta(z)<2/m}\delta(z)^2\right) \geq \frac{t}{m}
$$
and
$$
\Exp[W_f(V)] = t\left(1/m + \sum_{z:\delta(z)<2/m}\delta(z)^2\right)
< t\left(1/m + \sum_{z:\delta(z)<2/m}(\epsilon/4m)^2\right) \leq
\left(1+\frac{\epsilon^2}{16}\right)\frac{t}{m}.
$$

Now we move to the second item of the lemma, where $\norm{\PP_f-U}_1 \geq \epsilon$.
By Cauchy-Schwarz we have
$$
\sum_{z:\delta(z)<2/m}\delta(z)^2 = \sum_{z:\delta(z)<2/m}|\delta(z)|^2 \geq \frac{1}{r}\Big(\sum_{z:\delta(z)<2/m}|\delta(z)|\Big)^2,
$$
hence
$$
\Exp[\tW_f(V)] \geq t\left(1/m +\frac{1}{r}\Big(\sum_{z:\delta(z)<2/m}|\delta(z)|\Big)^2 + \frac{1}{m}\sum_{z:\delta(z)\geq
2/m}\delta(z)\right)$$ $$ \geq \frac{t}{m}\left(1 + \Big(\sum_{z:\delta(z)<2/m}|\delta(z)|\Big)^2 + \sum_{z:\delta(z)\geq
2/m}\delta(z)\right).
$$
Since $\sum_{z \in [m]} |\delta(z)|=\norm{\PP_f-U}_1 \geq \epsilon$, at least one of
$$
\sum_{z:\delta(z)<2/m}|\delta(z)| > \epsilon/2
$$
or
$$
\sum_{z:\delta(z)\geq 2/m}|\delta(z)| = \sum_{z:\delta(z)\geq 2/m}\delta(z) \geq \epsilon/2
$$
must hold. In both cases we have $\Exp[\tW_f(V)] > \frac{t}{m}(1 + \frac{\epsilon^2}{4})$, as required.

Finally, we prove the third statement of the lemma. By Hoeffding's Inequality we have
$$
\Pr\left[\Exp[\tW_f(V)] - \tW_f(V) >
\frac{\epsilon^2t}{32m}\right] \leq \exp\left(-\frac{2\epsilon^4
t^2}{1024m^2\sum_{y\in V}(b_y - a_y)^2}\right),
$$
where $b_y$ and $a_y$ are upper and lower bounds on $\tPP(y)$.
Since $b_y \leq 3/m$ and $a_y \geq 0$ for all $y\in[m]$, we get
$$
\Pr\left[\Exp[\tW_f(V)] - \tW_f(V) > \frac{\epsilon^2t}{32m}\right]
\leq \exp(-\Omega(\epsilon^4 t)) = o(1).
$$

\subsection{Testing Closeness to a Known Distribution}\label{sec:general}
In this section we prove Theorem \ref{thm:1unknown} based on Theorem
\ref{theo:unitester}. Let $\PP_f$ be an unknown distribution and let
$\PP_g$ be a known distribution, defined by $f,g:[n] \to [m]$
respectively. We show that for any $\epsilon > 0$, Algorithm
\ref{alg:ku} makes $O(\frac{m^{1/3} \log^2 m \log \log m}{\epsilon^5})$ queries and
distinguishes the case $\norm{\PP_f-\PP_g}_1=0$ from the case $\norm{\PP_f-
\PP_g}_1 > 5\epsilon$ with probability at least $2/3$, satisfying
the requirements of Theorem \ref{thm:1unknown}.\footnote{We use $5\epsilon$ instead $\epsilon$ for better readability in the sequel.}

\begin{algorithm}
\caption{(Tests closeness to a known distribution.)}
\label{alg:ku}
\begin{algorithmic}[1]
\STATE let $\mathcal{M} \triangleq \{M_0, \dots, M_k\} \gets
Bucket(\PP_g, [m], \frac{\epsilon}{4})$ for $k=\frac{2 \log
m}{\log(1+\epsilon/4)}$ \FOR {$i=1$ to $k$}
    \IF {$\PP_g(M_i) \geq \epsilon/k$}
        \IF {$\norm{({\PP_f})_{|M_i}-U_{|M_i}}_1 \geq \epsilon$ (check using the amplified version of Algorithm~\ref{alg:uni} from Corollary \ref{cor:unitester})}
            \STATE REJECT \label{line:rej1}
        \ENDIF
    \ENDIF
\ENDFOR \IF {$\norm{(\PP_f)_{\langle \mathcal{M} \rangle} -
(\PP_g)_{\langle \mathcal{M} \rangle}}_1 > \epsilon/4$
(check classically with $O(\sqrt{k}) = O(\log m)$ queries \cite{BFF+01})}
    \STATE REJECT \label{line:rej2}
\ENDIF
\STATE ACCEPT
\end{algorithmic}
\end{algorithm}

Observe that no queries are made by Algorithm \ref{alg:ku} itself,
and the total number of queries made by calls to Algorithm
\ref{alg:uni} is bounded by $k \cdot O(\frac{k}{\epsilon} \cdot \frac{m^{1/3} \log \log m}{\epsilon^2}) +
O(\sqrt{k}) = O(\frac{m^{1/3} \log^2 m\log \log m}{\epsilon^5})$.\footnote{The additional factor of $\frac{k}{\epsilon}$ is for executing 
Algorithm \ref{alg:uni} on the conditional distributions $({\PP_f})_{|M_i}$, with $\PP_f(M_i) \geq \frac{\epsilon}{k}$.}
 In addition, the failure probability of Algorithm \ref{alg:uni} is at most
$1/\log^{2}m \ll 1/k$, so we can assume that with high probability none
of its executions failed.

For any $i\in [k]$ and any $x \in M_i$, by the definition of the
buckets $\frac{(1 + \epsilon/4)^{i-1}}{m\log m} \leq \PP_g(x) \leq
\frac{(1 + \epsilon/4)^{i}}{m\log m}$. Thus, for any $i \in [k]$ and
$x\in M_i$, $(1 - \frac{\epsilon}{4})/|M_i| < 1/(1 +
\frac{\epsilon}{4})|M_i| <  (\PP_g)_{|M_i}(x) < (1 +
\frac{\epsilon}{4})/|M_i|$, or equivalently for any $i \in [k]$ we
have $\norm{(\PP_g)_{|M_i} - U_{|M_i}}_\infty \leq
\frac{\epsilon}{4|M_i|}$. This means that if $\norm{\PP_f - \PP_g}_1
= 0$ then
\begin{enumerate}
 \item for any $i \in [k]$, $\norm{(\PP_f)_{|M_i} - U_{|M_i}}_\infty \leq \frac{\epsilon}{4|M_i|}$
and thus the tester never outputs REJECT in Line \ref{line:rej1} (since we assumed that Algorithm \ref{alg:uni} did not err in any of its executions).
\item $\norm{(\PP_f)_{\langle \mathcal{M} \rangle} -
(\PP_g)_{\langle \mathcal{M} \rangle}}_1 = 0$, and hence the tester does not output REJECT in Line \ref{line:rej2} either.
\end{enumerate}

On the other hand, if $\norm{\PP_f - \PP_g}_1 > 5\epsilon$ then by
Corollary~\ref{coro:partition} we know that either $|(\PP_f)_{\langle
\mathcal{M} \rangle} - (\PP_g)_{\langle \mathcal{M} \rangle}| >
\epsilon/4$ or there is at least one $i\in [k]$ for which
$\PP_f(M_i) \geq \epsilon/k$ and $\norm{(\PP_f)_{|M_i} -
(\PP_g)_{|M_i}}_1 > 5\epsilon/4$ (otherwise $\norm{\PP_f - \PP_g}_1$
must be smaller than $2( 5\epsilon/4 + \epsilon/4 +  \epsilon) =
5\epsilon$). In the first case the tester will reject in Line
\ref{line:rej2}. In the second case the tester will reject in Line
\ref{line:rej1} as $\norm{(\PP_f)_{|M_i} - (\PP_g)_{|M_i}}_1 >
5\epsilon/4$ implies (by the triangle inequality)
$\norm{(\PP_f)_{|M_i} - U_{|M_i}}_1 > \epsilon$, since
$\norm{(\PP_g)_{|M_i} - U_{|M_i}}_1 < \epsilon/4$ by
Lemma~\ref{lem:bucket}.

\section{Quantum Lower Bounds for Testing Distributions} \label{sec:qlbs}

Here we show that our quantum testing algorithm for the
known-unknown case is close to optimal: even for testing an unknown
distribution (given as $f:[n]\to[m]$) against the uniform one, we
need $\Omega\Big(m^{1/3}\Big)$ quantum queries. As Bravyi, Hassidim, and
Harrow~\cite{BHH10} also independently observed, such a lower bound can be derived
from known lower bounds for the collision problem.  However, one has
to be careful to use the version of the lower bound that applies to
functions $f:[m]\to[m]$, due to Ambainis~\cite{ambainis:colsmallrange} and
Kutin~\cite{kutin:collision}, rather than the earlier lower bound of
Aaronson and Shi~\cite{aaronson&shi:collision} that had to assume a
larger range-size.

\begin{theorem}
Let $A$ be a quantum algorithm that given a fixed $\epsilon\in[0,1]$ tests whether an unknown distribution
is equal to uniform or at least $\epsilon$-far from it,
meaning that for every $f:[n]\to[m]$, with success probability at least $2/3$, it decides whether
$\PP_f=U$ or $\norm{\PP_f - U}_1\geq\epsilon$ (under the promise that one of these two cases holds).
Then $A$ makes $\Omega\Big(m^{1/3}\Big)$ queries to~$f$.
\end{theorem}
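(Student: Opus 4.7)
The plan is to embed the collision problem into uniformity testing. Recall that the collision problem on functions $f:[m]\to[m]$ asks one to distinguish 1-to-1 inputs from exactly 2-to-1 inputs (those with $|\mathrm{range}(f)|=m/2$ where every image has exactly two preimages); by the lower bounds of Ambainis~\cite{ambainis:colsmallrange} and Kutin~\cite{kutin:collision}, this requires $\Omega(m^{1/3})$ quantum queries. I would instantiate the tester $A$ with $n=m$. In the 1-to-1 case, $\PP_f(j)=1/m$ for every $j\in[m]$, so $\PP_f=U$ exactly. In the 2-to-1 case, $m/2$ values of $j$ satisfy $\PP_f(j)=2/m$ while the remaining $m/2$ satisfy $\PP_f(j)=0$, and hence
\[
\norm{\PP_f-U}_1 \;=\; \tfrac{m}{2}\cdot\tfrac{1}{m} + \tfrac{m}{2}\cdot\tfrac{1}{m} \;=\; 1 \;\geq\;\epsilon
\]
for any $\epsilon\in[0,1]$. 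Therefore $A$, run on such an $f$, solves the collision problem with the same number of queries and with success probability at least $2/3$, and the claimed $\Omega(m^{1/3})$ lower bound follows immediately.

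The one point that requires attention (and is essentially the only obstacle) is the choice of collision lower bound. The original Aaronson--Shi bound~\cite{aaronson&shi:collision} requires the range of $f$ to be substantially larger than the domain, which is insufficient here: to force $\PP_f=U$ on the full set $[m]$ in the 1-to-1 case, the domain and range must have the same size. This is precisely the small-range regime covered by the Ambainis and Kutin refinements, so one must quote those rather than Aaronson--Shi. Beyond this, no hybrid arguments, distributional reductions, or averaging are needed: the two extremal instances of the collision problem already populate the two sides of the uniformity-testing promise, and the reduction preserves both queries and success probability exactly.
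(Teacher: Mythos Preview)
Your proposal is correct and matches the paper's own proof essentially line for line: reduce from the collision problem on $f:[m]\to[m]$, observe that 1-to-1 inputs give $\PP_f=U$ while 2-to-1 inputs give $\norm{\PP_f-U}_1=1$, and invoke the Ambainis/Kutin small-range collision lower bound. Your explicit remark that Aaronson--Shi is insufficient here (because domain and range must coincide) is exactly the caveat the paper highlights as well.
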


\begin{proof}
Consider the following distribution on $f:[m] \to [m]$:
with probability 1/2, $f$ is a random 1-1 function (equivalently, a random permutation on $[m]$),
and with probability 1/2, $f$ is a random 2-to-1 function.
In the first case we have $\PP_f=U$, while in the second case
$\PP_f(j)\in\{0,2/m\}$ for all $j\in[m]$ and hence $\norm{\PP_f - U}_1=1$.
Thus a quantum testing algorithm like $A$ can decide between these two cases with high success probability.
But Ambainis~\cite{ambainis:colsmallrange} and Kutin~\cite{kutin:collision} showed that this requires $\Omega(m^{1/3})$ queries.
%
\end{proof}

\section{Quantum Lower Bounds for Reconstructing Distributions}\label{sec:reconstructionlower}

Previously we studied the problem of \emph{deciding} whether an
unknown distribution, given by $f:[n] \to [m]$, is close to or far
from another distribution (which itself may be known or unknown). Of
course, the easiest way to solve such a decision problem would be to
\emph{reconstruct} the unknown distribution, up to some small
$L_1$-error. Efficiently solving the reconstruction problem, say in
$m^{1/2}$ or even $m^{1/3}$ queries, would immediately allow us to
solve the decision problem. However, below we prove that even
quantum algorithms cannot solve the reconstruction problem efficiently.

\begin{theorem} \label{theo:rec}
Let $0<\epsilon<1/2$ be a fixed constant.
Let $A$ be a quantum algorithm that solves the reconstruction problem, meaning that
for every $f:[n]\to[m]$, with probability at least $2/3$, it outputs a probability
distribution $\PP\in[0,1]^m$ such that $\norm{\PP - \PP_f}_1\leq\epsilon$.
Then $A$ makes $\Omega(m/\log m)$ queries to~$f$.
\end{theorem}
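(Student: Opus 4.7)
The approach is to exhibit a family of $2^{\Omega(m)}$ functions $f:[m]\to[m]$ whose distributions $\PP_f$ are pairwise $\Omega(1)$-far in $L_1$-norm, and then argue information-theoretically that distinguishing among so many possibilities requires $\Omega(m/\log m)$ quantum queries.

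For the construction I will use a binary code $C\subseteq\{0,1\}^{m/2}$ of size $|C|=2^{\Omega(m)}$ and constant relative Hamming distance (existence by Gilbert--Varshamov). For each codeword $c\in C$, define $f_c:[m]\to[m]$ by $f_c(2i-1)=f_c(2i)=2i-c_i$ for all $i\in[m/2]$. Then $\PP_{f_c}$ places mass $2/m$ on each point of $\{2i-c_i : i\in[m/2]\}$ and zero elsewhere, so a direct calculation gives $\norm{\PP_{f_c}-\PP_{f_{c'}}}_1 = (4/m)\,d_H(c,c')$, which is $\Omega(1)$ whenever $c\neq c'$. Consequently, once $\epsilon$ is smaller than half the minimum pairwise $L_1$-distance, any reconstruction algorithm must effectively identify $c$ by rounding its output to the nearest codeword distribution; a lower bound on reconstruction then follows from a lower bound on identifying a random $c\in C$ from queries to $f_c$.

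The remaining step is to argue that identifying a uniformly random $c\in C$ from $T$ quantum queries to $f_c$ forces $T=\Omega(m/\log m)$. The plan is a standard quantum information-theoretic argument: purify the uniform distribution over $C$ into a reference register, and track the mutual information between this register and the algorithm's workspace as queries accumulate. By Fano's inequality and the success guarantee of the reconstruction algorithm, the algorithm's classical output must carry $\log|C|=\Omega(m)$ bits of information about $c$; on the other hand, each query to a function with range $[m]$ ought to inject only $O(\log m)$ bits of information about $c$ into the algorithm's register, giving the desired bound.

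The main obstacle is justifying this per-query information ceiling, because for arbitrary function families it simply fails---Bernstein--Vazirani transmits many bits per query to a Boolean-valued function. The saving feature of our family is its lack of algebraic structure: each query to $f_c$ cleanly factors into a ``position'' part depending only on the input index $i$ and a single ``secret'' bit $c_{\lceil i/2\rceil}$. This reduces the problem, with only constant query overhead, to quantum oracle interrogation of the Boolean function $g_c(j)=c_j$ on $[m/2]$, from which the per-query information ceiling can be extracted via a hybrid argument over the coordinates of $c$ (swapping one bit at a time and bounding the corresponding trace-distance perturbation of the algorithm's final state). I expect this hybrid step to be the delicate part of the proof.
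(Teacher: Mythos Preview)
Your high-level strategy matches the paper's: embed a high-entropy string into the oracle $f$, observe that any $\epsilon$-accurate reconstruction of $\PP_f$ recovers that string up to small Hamming error, and then argue that extracting $\Omega(m)$ bits about the hidden string forces many queries. The paper does this with all weight-$m/2$ strings rather than a Gilbert--Varshamov code; your coding detour is unnecessary, since reconstruction already yields the hidden string to within Hamming distance $\epsilon m$, so you only need high entropy (and Fano), not pairwise separation.

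The genuine gap is your final step. The ``hybrid argument over the coordinates of $c$'' you describe---flip one bit at a time and bound the trace-distance perturbation of the final state---is the Bennett--Bernstein--Brassard--Vazirani hybrid, and it does \emph{not} deliver an $O(\log m)$-per-query information ceiling. Carrying it through: if $\bar q_j$ is the expected query mass on index $j$, flipping bit $j$ perturbs the final state by at most $2\sqrt{T\bar q_j}$; combining this with the requirement that each output bit agree with $c_j$ on average, summing over $j$, and using $\sum_j\bar q_j=T$ together with Cauchy--Schwarz, yields only $T=\Omega(\sqrt m)$---strictly weaker than the claimed $\Omega(m/\log m)$. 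Hybrid arguments bound distinguishability, not mutual information, and there is no direct conversion here.

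The paper avoids all of this with a query-to-communication simulation plus Holevo's theorem. Bob runs $A$; at each query he ships the $O(\log m)$-qubit query register to Alice, who applies the oracle and returns it. Total communication is $O(T\log m)$ qubits, and Holevo says this upper-bounds the mutual information between Alice's string and Bob's output. Since reconstruction yields a string within Hamming distance $\epsilon m$ of Alice's, the mutual information is $\Omega(m)$, and $T=\Omega(m/\log m)$ follows. This argument is oblivious to any algebraic structure of $f$ and dissolves your Bernstein--Vazirani worry automatically: a single BV query may reveal many classical bits, but in the simulation it still costs only $O(\log m)$ qubits, which is all Holevo cares about.
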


\begin{proof}
The proof uses some basic quantum information theory, and is most easily stated in a communication setting.
Suppose Alice has a uniformly distributed $m$-bit string $x$ of weight $m/2$.
This is a random variable with entropy $\log\binom{m}{m/2}=m-O(\log m)$ bits.
Let $q$ be the number of queries $A$ makes.
We will show below that Alice can give Bob $\Omega(m)$ bits of information (about $x$),
by a process that (interactively) communicates $O(q\log m)$ qubits.
By Holevo's Theorem~\cite{holevo} (see also~\cite[Theorem~2]{cdnt:ip}),
establishing $k$ bits of mutual information requires communicating at least $k$ qubits,
hence $q=\Omega(m/\log m)$.

Given an $x\in\{0,1\}^m$ of weight $n=m/2$, let $f:[n]\to[m]$ be an injective function to $\{j \mid x_j=1\}$,
and let $\PP_f$ be the corresponding probability distribution over $m$ elements
(which is $\PP_f(j)=2/m$ where $x_j=1$, and $\PP_f(j)=0$ where $x_j=0$).
Let $\PP$ be the distribution output by algorithm $A$ on $f$.
We have $\norm{\PP - \PP_f}_1\leq\epsilon$ with probability at least $2/3$.
Define a string $\widetilde{x}\in\{0,1\}^m$ by $\widetilde{x}_j=1$ iff $\PP(j)\geq 1/m$.
Note that at each position $j\in[m]$ where $x_j\neq \widetilde{x}_j$,
we have $|\PP(j)-\PP_f(j)|\geq 1/m$. Hence $\norm{\PP - \PP_f}_1\geq d(x,\widetilde{x})/m$.
Since $\norm{\PP - \PP_f}_1\leq\epsilon$ (with probability at least $2/3$),
the algorithm's output allows us to produce (with probability at least $2/3$)
a string $\widetilde{x}\in\{0,1\}^m$ at Hamming distance $d(x,\widetilde{x})\leq\epsilon m$ from $x$.
But then it is easy to calculate that the mutual information between $x$ and $\widetilde{x}$ is $\Omega(m)$ bits.

Finally, to put this in the communication setting, note that Bob can run the algorithm $A$,
implementing each query to $f$ by sending the $O(\log n)$-qubit query-register to Alice,
who plugs in the right answer and sends it back (this idea comes from~\cite{BuhrmanCleveWigderson98}).
The overall communication is $O(q\log m)$ qubits.
\end{proof}

\section{From Sampling Problems to Oracle Problems}\label{sec:sampling}

A standard way to access a probability distribution $\PP$ on $[m]$
is by \emph{sampling} it: sampling once gives the outcome $y\in[m]$
with probability $\PP(y)$. However, in this paper we usually assume
that we can access the distribution by querying a function
$f:[n]\to[m]$, where the probability of $y$ is now interpreted as
the fraction of the domain that is mapped to $y$. Below we describe
the connection between these two approaches.

Suppose we sample $\PP$ $n$ times, and estimate each probability
$\PP(y)$ by the fraction $\widetilde{\PP}(y)$ of times $y$ occurs
among the $n$ outcomes. We will analyze how good an estimator this
is for $\PP(y)$. For all $j\in[n]$, let $Y_j$ be the indicator
random variable that is 1 if the $j$th sample is $y$, and 0
otherwise. This has expectation $\Exp[Y_j]=\PP(y)$ and variance
$\Var[Y_j]=\PP(y)(1-\PP(y))$. Our estimator is
$\widetilde{\PP}(y)=\sum_{j\in[n]} Y_j/n$. This has expectation
$\Exp[\widetilde{\PP}(y)]=\PP(y)$ and variance
$\Var[\widetilde{\PP}(y)]=\PP(y)(1-\PP(y))/n$, since the $Y_j$'s are
independent. Now we can bound the expected error of our estimator
for $\PP(y)$ by
$$
\Exp\left[|\widetilde{\PP}(y) - \PP(y)|\right]
\leq\sqrt{\Exp\left[|\widetilde{\PP}(y) - \PP(y)|^2\right]}
=\sqrt{\Var\left[\widetilde{\PP}(y)\right]}
\leq\sqrt{\PP(y)/n}.
$$
And we can bound the expected $L_1$-distance between the original distribution $\PP$
and its approximation $\widetilde{\PP}$ by
\begin{eqnarray*}
\Exp\left[\norm{\widetilde{\PP} - \PP}_1\right]
= \sum_{y\in[m]}\Exp\left[|\widetilde{\PP}(y) - \PP(y)|\right]
\leq\sum_{y\in[m]}\sqrt{\PP(y)/n}
\leq \sqrt{m/n},
\end{eqnarray*}
where the last inequality used Cauchy-Schwarz and the fact that
$\sum_y\PP(y)=1$. For instance, if $n=10000 m$ then
$\Exp[\norm{\widetilde{\PP} - \PP}_1]\leq 1/100$, and hence (by
Markov's Inequality) $\norm{\widetilde{\PP} - \PP}_1\leq 1/10$ with
probability at least 9/10. If we now define a function $f:[n]\to[m]$
by setting $f(j)$ to the $j$th value in the sample, we have obtained
a representation which is a good approximation of the original
distribution. Note that if $n=o(m)$ then we cannot hope to be able
to approximately represent all possible $m$-element distributions by
some $f:[n]\to[m]$, since all probabilities will be integer
multiples of $1/n$.  For instance if $\PP$ is uniform and $n=o(m)$,
then the total $L_1$-distance between $\PP$ and a $\widetilde{\PP}$
induced by any $f:[n]\to[m]$ is near-maximal. Accordingly, the
typical case we are interested in is $n=\Theta(m)$.

\section{Proof of Theorem \ref{theo:uu_gi}} \label{sec:uu_gi}

In \cite{FM}, the bottleneck (with respect to the query complexity) of the algorithm for testing
graph isomorphism in the known-unknown case is the subroutine that
tests closeness between two distributions over $V$.
All other parts of the
algorithm make only a polylogarithmic number of queries. Therefore, our main theorem
implies that with quantum oracle access, graph isomorphism in the known-unknown setting can be tested
with $\widetilde{O}(|V|^{1/3})$ queries.

On the other hand, a general lower bound on the query complexity of testing distributions in
the known-unknown case need not imply a lower bound for testing graph
isomorphism. But still, in \cite{FM} it is proved that a lower bound on the query complexity for deciding
whether the function $f:[n]\to [n]$ is one-to-one (i.e., injective) or is two-to-one 
(i.e., the pre-image of any $j\in [n]$ is either empty or of size $2$) 
is sufficient for showing a matching lower bound for graph isomorphism.
Since our quantum lower bound for the known-unknown testing case is derived
from exactly that problem (see Section~\ref{sec:qlbs}), we get a matching lower bound
of $\Omega(|V|^{1/3})$ on the number of quantum queries necessary
for testing graph isomorphism in the known-unknown case.

For the unknown-unknown case,
the lower bound mentioned in Theorem \ref{theo:uu_gi} follows from
 the lower bound for the known-unknown case. To get the upper bound of $\widetilde{O}(|V|^{7/6})$ queries, we have to slightly modify the algorithm from \cite{FM}. We start by outlining the ideas in the algorithm of \cite{FM} for testing isomorphism between two unknown graphs $G$ and $H$.

Let $G$ be a graph and $C_G \subseteq V(G)$. A {\em $C_G$-label} of
a vertex $v \in V(G)$ is a binary vector of length $|C_G|$ that
represents the neighbors of $v$ in $C_G$. The distribution
$\PP_{C_G}$ over $\{0,1\}^{|C_G|}$ is defined according to the graph
$G$, where for every $x \in \{0,1\}^{|C_G|}$ the probability
$\PP_{C_G}(x)$ is proportional to the number of vertices in $G$ with
$C_G$-label equal to $x$. Notice that the support of $\PP_{C_G}$ is
bounded by $|V(G)|$.

The algorithm of~\cite{FM} is based on two main observations:
\begin{enumerate}
\item if there is an isomorphism $\sigma$ between $G$ and $H$, then for every $C_G \subseteq V(G)$ and the corresponding $C_H \triangleq \sigma(C_G)$, the distributions $\PP_{C_G}$ and $\PP_{C_H}$ are identical.
\item if  $G$ and $H$ are far from being isomorphic, then for every equally-sized (and not too small) $C_G \subseteq V(G)$ and $C_H \subseteq V(H)$, either the distributions $\PP_{C_G}$ and $\PP_{C_H}$ are far, or otherwise it is possible to test with only a poly-logarithmic number of queries that there exists no isomorphism that maps $C_G$ to $C_H$.
\end{enumerate}
Once these observations are made, the high level idea in the algorithm of $\cite{FM}$ is to go over a sequence of pairs of sets $C_G,C_H$ (such that with high probability at least one of them satisfies $C_H \triangleq \sigma(C_G)$ if indeed an isomorphism $\sigma$ exists), and to test closeness between the corresponding distributions $\PP_{C_G}$ and $\PP_{C_H}$.

This sequence of pairs is defined as follows: first we pick (at random) a set $U_G$ of $|V|^{1/4} \log^3 |V|$ vertices
from $G$ and a set $U_H$ of $|V|^{3/4} \log^3 |V|$ vertices from $H$.
Then we make all $|V|^{5/4} \log^3 |V|$ possible queries in $U_G
\times V(G)$. After this, for any $C_G \subseteq U_G$ the distribution $\PP_{C_G}$ is known exactly.
Indeed, the sequence of sets $C_G,C_H$ will consist of all pairs $C_G \subseteq U_G,C_H \subseteq U_H$, where both $C_G$ and $C_H$ are of size $\log^2 |V|$.
It is not hard to prove that if $G$ and $H$ have an isomorphism $\sigma$,
then with probability $1-o(1)$
the size of $U_H \cap \sigma(U_G)$ will exceed $\log^2 |V|$, and hence one of the pairs will satisfy $C_H \triangleq \sigma(C_G)$.

Now, for each pair $C_G,C_H$ we test if the distributions $\PP_{C_G}$ and $\PP_{C_H}$ are identical. Since we know the distributions
$\PP_{C_G}$ (for every $C_G \subseteq U_G$), we only need to sample the distributions $\PP_{C_H}$.
Sampling the distributions $\PP_{C_H}$ is done by taking a set $S \subseteq V(H)$ of size $\widetilde{O}(\sqrt{|V|})$ and re-using it for all these tests. In total, the algorithm in \cite{FM}  makes roughly $|U_G
\times V(G)| + |U_H \times S| = \widetilde{O}(|V|^{5/4})$ queries.

To get the desired improvement, we follow the same path, but use our quantum distribution tester instead of the classical one. This allows us to reduce the size of the set $S$ to  $\widetilde{O}(|V|^{1/3})$. Consequently, in order to balance the amount of queries we make in both graphs, we will resize the sets $U_G$ and $U_H$ to $\widetilde{O}(|V|^{1/6})$ and $\widetilde{O}(|V|^{5/6})$ respectively, which still satisfies the ``large-intersection'' property and brings the total number of queries down to  $|U_G
\times V(G)| + |U_H \times S| =\widetilde{O}(|V|^{7/6})$.

\section{Proof of Theorem \ref{thperiodicity}} \label{sec:periodicity}
\subsection{Quantum Upper Bound}

The quantum tester is very simple, and completely based on existing ideas.
First, run a variant of Shor's algorithm to find the period of $f$ (if there is one), using $O(1)$ queries.
Second, test whether the purported period is indeed the period, using another $O(1)$ queries as described above.
Accept iff the latter test accepts.

For the sake of completeness we sketch here how Shor's algorithm can be used to find the unknown period $p$ of
an $f$ that is promised to be 1-1-$p$-periodic for some value of $p\leq\sqrt{n}/2$.
Here is the algorithm:%
\footnote{For this to work, the 1-1 property on $[p]$ is crucial; for instance, quantum algorithms need
about $\sqrt{n}$ queries to find the period of functions with range $\zo$.
Also the fact that $p=O(\sqrt{n})$ is important, because the quantum algorithm needs to see many repetitions
of the period on the domain $[n]$.}
\begin{enumerate}
\item First prepare the 2-register quantum state
$
\displaystyle\frac{1}{\sqrt{n}}\sum_{i\in[n]}\ket{i}\ket{0}
$
\item Query $f$ once (in superposition), giving
$
\displaystyle\frac{1}{\sqrt{n}}\sum_{i\in[n]}\ket{i}\ket{f(i)}
$
\item Measure the second register, which gives some $f(s)$ for $s\in[p]$ and collapses
the first register to the~$i$ having the same $f$-value:
$
\displaystyle \frac{1}{\sqrt{\lfloor n/p\rfloor}}\sum_{i\in[n], i=s {\rm \ mod\ } p}\ket{i}\ket{f(i)}
$
\item Do a quantum Fourier transform\footnote{This is the unitary map
$\ket{x}\to \frac{1}{\sqrt{n}}\sum_{y\in[n]}e^{2\pi i xy/n}\ket{y}$.
If $n$ is a power of 2 (which we can assume here without loss of generality),
then the QFT can be implemented using $O((\log n)^2)$ elementary quantum gates~\cite[Section~5.1]{nielsen&chuang:qc}.}
on the first register and measure.\\
Some analysis shows that with high probability the measurement gives an $i$ such that
$\displaystyle
\left|\frac{i}{n}-\frac{c}{p}\right|<\frac{1}{2n}
$,
where $c$ is a random (essentially uniform) integer in $[p]$.
Using continued fraction expansion, we can then calculate the unknown fraction $c/p$ from the known fraction $i/n$.%
\footnote{Two distinct fractions each with denominator $\leq \sqrt{n}/2$ are at least $4/n$ apart.
Hence there is only one fraction with denominator at most $\sqrt{n}/2$ within distance $2/n$ from the known fraction $i/n$.
This unique fraction can only be $c/p$, and CFE efficiently finds it for us.
Note that we do not obtain $c$ and $p$ separately, but just their ratio given as a numerator and
a denominator in lowest terms.
If $c$ and $p$ were coprime that would be enough, but that need not happen with high probability.}
\item Doing the above 4 steps $k$ times gives fractions $c_1/p,\ldots,c_k/p$, each given as a numerator and a denominator (in lowest terms).
Each of the $k$ denominators divides $p$, and if $k$ is a sufficiently large constant
then with high probability (over the $c_i$'s), their least common multiple is $p$.
\end{enumerate}

\subsection{Classical Lower Bound}

We saw above that quantum computers can efficiently test {\sc 1-1-periodicity} $\PP_{\sqrt{n}/4,\sqrt{n}/2}$.
Here we will show that this is not the case for classical testers: those need roughly $\sqrt{r}$ queries
for 1-1-periodicity testing $\PP_{r/2,r}$, in particular roughly $n^{1/4}$ queries for $r=\sqrt{n}/2$.
Our proof follows along the lines of Lachish and Newman~\cite{lachish&newman:periodicity}.
However, since their proof applies to functions with range 0/1 that need not satisfy the 1-1 property, some modifications are needed.

Fix a sufficiently large even integer $r<n/2$.
We will use Yao's principle, proving a lower bound for \emph{deterministic} query testers with error probability $\leq 1/3$
in distinguishing two distributions, one on negative instances and one on positive instances.
First, the ``negative'' distribution ${\cal D}_N$ is uniform on all $f:[n]\to [m]$ that are $\epsilon$-far from $\PP_{r/2,r}$.
Second, the ``positive'' distribution ${\cal D}_P$ chooses a \emph{prime} period $p\in[r/2,r]$ uniformly,
then chooses a 1-1 function $[p]\to [m]$ uniformly (equivalently, chooses a sequence of $p$ distinct elements from $[m]$),
and then completes $f$ by repeating this period until the domain $[n]$ is ``full''.
Note that the last period will not be completed if $p\not|n$.

Suppose $q=o(\sqrt{r/\log r\log n})$ is the number of queries of our deterministic tester.
Fix a set $Q=\{i_1,\ldots,i_q\}\subseteq[n]$ of $q$ queries.
Let $f(Q)\in[m]^q$ denote the concatenated answers $f(i_1),\ldots,f(i_q)$.
We prove two lemmas, one for the negative and one for the positive distribution,
showing $f(Q)$ to be close to uniformly distributed in both cases.

\begin{lemma}
For all $\eta\in[m]^q$, we have $\Pr_{{\cal D}_N}[f(Q)=\eta]=(1\pm o(1))m^{-q}$.
\end{lemma}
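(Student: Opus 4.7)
The plan is to show that the set of functions $f:[n]\to[m]$ that are $\epsilon$-far from $\PP_{r/2,r}$ is so overwhelmingly dense in $[m]^n$ that conditioning on this event barely perturbs the uniform measure on $[m]^n$, and then to compare the conditional probability of the event $\{f(Q)=\eta\}$ with its unconditional value $m^{-q}$.

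First I would upper bound the size of the ``bad'' set $B \triangleq \{f:[n]\to[m]: f \text{ is not } \epsilon\text{-far from } \PP_{r/2,r}\}$. The set $\PP_{r/2,r}$ itself has at most $(r/2)\cdot m^r$ members, since for each period $p\in[r/2,r]$ there are $m(m-1)\cdots(m-p+1)\leq m^r$ functions that are 1-1-$p$-periodic on $[n]$. Every $f\in B$ lies within Hamming distance $\epsilon n$ of some such function, and a Hamming ball of radius $\epsilon n$ in $[m]^n$ has size at most $\binom{n}{\epsilon n}(m-1)^{\epsilon n}\leq 2^{H(\epsilon)n+\epsilon n\log m}$, where $H$ is binary entropy. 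A union bound then gives
\[
|B|\ \leq\ \tfrac{r}{2}\cdot m^r\cdot 2^{H(\epsilon)n+\epsilon n\log m}.
\]

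Next I would show $|B|=o(m^{n-q})$. Dividing through,
\[
\frac{|B|}{m^{n-q}}\ \leq\ \frac{r}{2}\cdot 2^{H(\epsilon)n}\cdot m^{\,r+\epsilon n+q-n}.
\]
Because $\epsilon$ is a small fixed constant (say $\epsilon<1/4$), $r<n/2$, and $q=o(\sqrt{r/\log r \log n})=o(\sqrt{n})$, the exponent $n-r-\epsilon n - q$ exceeds $n/5$ once $n$ is large. Combined with the standing assumption $m\geq n$, this makes $m^{-(n-r-\epsilon n-q)}\leq n^{-n/5}$, which crushes the $2^{H(\epsilon)n}\leq 2^{n}$ factor and the polynomial $r/2$ term. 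Hence $|B|=o(m^{n-q})$, and in particular $|B|/m^n = o(m^{-q})$.

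Finally I would conclude by a direct counting comparison. For any $\eta\in[m]^q$, exactly $m^{n-q}$ of the $m^n$ functions $f:[n]\to[m]$ satisfy $f(Q)=\eta$, and at most $|B|$ of them lie in $B$. Therefore
\[
\Pr_{\mathcal D_N}[f(Q)=\eta]\ =\ \frac{m^{n-q}-|\{f\in B: f(Q)=\eta\}|}{m^n-|B|}\ \in\ \left[\frac{m^{n-q}-|B|}{m^n},\ \frac{m^{n-q}}{m^n-|B|}\right],
\]
and both endpoints equal $(1\pm o(1))m^{-q}$ by the bound on $|B|$. I expect the only real obstacle to be verifying the bookkeeping of exponents in Step~2 with sufficient slack, but since there is a comfortable gap ($n-r\geq n/2$, $\epsilon n\leq n/4$, $q=o(\sqrt n)$, and $m\geq n$), the argument leaves plenty of room.
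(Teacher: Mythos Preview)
Your proposal is correct and follows essentially the same approach as the paper: bound the number of functions that are $\epsilon$-close to some $\PP_p$ with $p\leq r$, show this is $o(m^{n-q})$ using $r<n/2$, $m\geq n$, and $q=o(\sqrt{n})$, and conclude that conditioning on the complement perturbs the uniform probability of $\{f(Q)=\eta\}$ by only a $(1\pm o(1))$ factor. The paper phrases the last step as a bound on the total variation distance between $\mathcal{D}_N$ and the uniform distribution, while you do the equivalent direct counting; the content is the same.
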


\begin{proof}
We first upper bound the number of functions $f:[n]\to [m]$ that are $\epsilon$-\emph{close} to $p$-periodic for a specific $p$.
The number of functions that are perfectly $p$-periodic is $m^p$, since such a function is determined by its first $p$ values.
The number of functions $\epsilon$-close to a fixed $f$ is at most ${n \choose \epsilon n}m^{\epsilon n}$.
Hence the number of functions $\epsilon$-close to $\PP_p$ is at most $m^p{n\choose \epsilon n}m^{\epsilon n}$.
Therefore, under the uniform distribution $\cal U$ on all $m^n$ functions $f:[n]\to [m]$,
the probability that there is a period $p\leq r$ for which $f$ is $\epsilon$-close to $\PP_p$ is at most
$$
\frac{r\cdot m^r{n\choose \epsilon n}m^{\epsilon n}}{m^n}\leq m^{n/2+H(\epsilon)n/\log m+\epsilon n - n},
$$
where we used $r<n/2$, $n\leq m$, and ${n\choose\epsilon n}\leq 2^{H(\epsilon)n}$
with $H(\cdot)$ denoting binary entropy. If $\epsilon$ is a sufficiently small constant,
then this probability is $o(m^{-q})$ (in fact much smaller than that).
Hence the variation distance between ${\cal D}_N$ and the uniform distribution $\cal U$ is $o(m^{-q})$,
and we have
$$
\left|\Pr_{{\cal D}_N}[f(Q)=\eta]-m^{-q}\right| = \left|\Pr_{{\cal D}_N}[f(Q)=\eta]-\Pr_{\cal U}[f(Q)=\eta]\right| = o(m^{-q}).
$$
\end{proof}

\begin{lemma}
There exists an event $B$ such that $\Pr_{{\cal D}_P}[B]=o(1)$,
and for all $\eta\in[m]^q$ with distinct coordinates,
we have $\Pr_{{\cal D}_P}[f(Q)=\eta\mid \overline{B}]=(1\pm o(1))m^{-q}$.
\end{lemma}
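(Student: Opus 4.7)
The idea is that under $\mathcal{D}_P$, once we condition on the random prime $p$, the answer vector $f(Q)$ is determined by the values of $f$ at the \emph{reduced} query positions $i_1\bmod p,\ldots,i_q\bmod p$. Because the first period is a uniformly random injection $[p]\to[m]$, if these reduced positions happen to be all distinct, then $f(i_1),\ldots,f(i_q)$ is a uniformly random length-$q$ sequence of distinct elements of $[m]$, which puts probability $1/\bigl(m(m-1)\cdots(m-q+1)\bigr) = (1\pm o(1))m^{-q}$ on any fixed distinct-coordinate $\eta$ (using $q=o(\sqrt{r/\log r\log n})$ and $n\leq m$, so $q\ll\sqrt m$). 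So the natural candidate for the bad event is
\[
B \;=\; \bigl\{\text{there exist }j\neq k\text{ with }i_j\equiv i_k\pmod p\bigr\}.
\]

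The heart of the proof is bounding $\Pr_{\mathcal{D}_P}[B]$. For a fixed pair $j\neq k$ let $d_{jk}=|i_j-i_k|\in[1,n]$. The event $i_j\equiv i_k\pmod p$ is exactly $p\mid d_{jk}$. Since $d_{jk}\leq n$ and every prime in the window $[r/2,r]$ is at least $r/2$, the number of primes in $[r/2,r]$ that divide $d_{jk}$ is at most $\log n/\log(r/2)=O(\log n/\log r)$. By the prime number theorem, the total number of primes in $[r/2,r]$ is $\Theta(r/\log r)$, so
\[
\Pr_p[\,p\mid d_{jk}\,] \;=\; O\!\left(\frac{\log n/\log r}{r/\log r}\right) \;=\; O(\log n / r).
\]
Taking a union bound over the $\binom{q}{2}$ pairs gives
\[
\Pr_{\mathcal{D}_P}[B] \;=\; O\!\bigl(q^2\log n / r\bigr) \;=\; o(1),
\]
by our assumption on $q$. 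This step — reducing to a counting statement about primes in a short interval and then invoking PNT — is the main technical obstacle, and crucially relies on the fact that $\mathcal{D}_P$ draws $p$ only from \emph{primes} (so that large common factors of many $d_{jk}$ cannot conspire to blow up the probability).

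Finally I would assemble the two pieces. Condition on any fixed prime $p\in[r/2,r]$ and on $\overline{B}$: the reduced positions $i_1\bmod p,\ldots,i_q\bmod p$ are then $q$ distinct elements of $[p]$, and the conditional distribution of the first-period values at those positions (under the uniform-injection choice) is a uniformly random ordered tuple of $q$ distinct elements of $[m]$. Hence for any fixed $\eta\in[m]^q$ with distinct coordinates,
\[
\Pr_{\mathcal{D}_P}[f(Q)=\eta \mid p,\overline{B}] \;=\; \frac{1}{m(m-1)\cdots(m-q+1)} \;=\; (1\pm o(1))\,m^{-q},
\]
where the last equality uses $q^2/m = o(1)$. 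This conditional probability does not depend on $p$, so averaging over $p$ (conditioned on $\overline{B}$) preserves it, yielding the claimed estimate for $\Pr_{\mathcal{D}_P}[f(Q)=\eta\mid\overline{B}]$.
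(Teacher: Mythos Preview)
Your proof is correct and follows essentially the same route as the paper: define $B$ as the event that some two queries collide modulo the random prime $p$, bound $\Pr[B]$ by counting how many primes in $[r/2,r]$ can divide a fixed difference $i_j-i_k$ and comparing with the $\Theta(r/\log r)$ primes available by the prime number theorem, and then observe that conditioned on $\overline{B}$ the answer tuple is a uniform $q$-sequence of distinct elements of $[m]$. The only cosmetic difference is that you use the sharper bound $O(\log n/\log r)$ on the number of large primes dividing $d_{jk}$, whereas the paper uses the cruder $\log n$; both suffice.
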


\begin{proof}
The distribution ${\cal D}_P$ uniformly chooses a prime period $p\in[r/2,r]$.
By the prime number theorem (assuming $r$ is at least a sufficiently large constant,
which we may do because the lower bound is trivial for constant $r$),
the number of distinct primes in this interval is asymptotically
$$
\frac{r}{\ln(r)}-\frac{r/2}{\ln(r/2)} \geq \frac{r}{2\log r}.
$$
Let $B$ be the event that a $p$ is chosen for which there exist
distinct $i,j\in Q$ satisfying $i=j$ mod $p$ (equivalently, $p$ divides $i-j$).
For each fixed $i,j$ there are at most $\log n$ primes dividing $i-j$.
Hence at most ${q\choose 2}\log n=o(r/\log r)$ $p$'s out of the at least $r/2\log r$ possible $p$'s can cause event $B$,
implying $\Pr_{{\cal D}_P}[B]=o(1)$.

Conditioned on $B$ not happening, $f(Q)$ is a uniformly random element of $[m]^q$ with distinct coordinates,
hence for each $\eta\in[m]^q$ with distinct coordinates we have
$$
\Pr_{{\cal D}_P}[f(Q)=\eta \mid \overline{B}]=\frac{1}{m}\frac{1}{m-1}\cdots\frac{1}{m-q+1}
=m^{-q}\prod_{i=0}^{q-1}\left(1+\frac{i}{m-i}\right)=(1+o(1))m^{-q}.
$$
\end{proof}

Since $(1-o(1))m^q$ of all $\eta\in[m]^q$ have distinct coordinates,
their weight under ${\cal D}_P$ sums to $1-o(1)$, and the other possible $\eta$ comprise
only a $o(1)$-fraction of the overall weight.
The query-answers $f(Q)$ are the only access the algorithm has to the input. Hence the previous two lemmas
imply that an algorithm with $o(\sqrt{r/\log r\log n})$ queries cannot distinguish ${\cal D}_P$ and ${\cal D}_N$
with probability better than $1/2+o(1)$.  This establishes the claimed classical lower bound.

\section{Summary and Open Problems}
In this paper we studied and compared the quantum and classical query complexities of
a number of testing problems.
The first problem is deciding whether two probability distributions on a set $[m]$ are equal or
$\epsilon$-far. Our main result is a quantum tester for the case
where one of the two distributions is known (i.e., given explicitly)
while the other is unknown and represented by a function that can be
queried. Our tester uses roughly $m^{1/3}$ queries to the function,
which is essentially optimal. It would be very interesting to extend
this quantum upper bound to the case where \emph{both} distributions are
unknown.  Such a quantum tester would show that the known-unknown
and unknown-unknown cases have the same complexity in the quantum
world. In contrast, they are known to have different complexities in
the classical world: about $m^{1/2}$ queries for the known-unknown
case and about $m^{2/3}$ queries for the unknown-unknown case.
The classical counterparts of these tasks play an important
role in many problems related to property testing. We already mentioned one example, the graph isomorphism problem, where
distribution testers are used as a black-box. 
We hope that the quantum analogues developed here and in~\cite{BHH10} will find similar use.

The second testing problem is deciding whether a given function $f:[n]\to [m]$ is periodic or far from periodic.
For the specific version of the problem that we considered (where in the first case the period
is at most about $\sqrt{n}$, and the function is injective within each period), we proved that quantum
testers need only a constant number of queries (using Shor's algorithm), while classical algorithms
need about $n^{1/4}$ queries.  Both this result and Aaronson's recent result 
on ``Fourier checking''~\cite{aaronson:bqpph} contrast 
with the constant-vs-$\log n$ and $\log n$-vs-$\sqrt{n}$
separations obtained by Buhrman et al.~\cite{bfnr:qprop} for other testing problems,
but still leave open their question: is there a testing problem where the separation is ``maximal'',
in the sense that quantum testers need only $O(1)$ queries while classical testers need $\Omega(n)$?

\subsubsection*{Acknowledgements}
We thank Avinatan Hassidim, Harry Buhrman and Prahladh Harsha for useful discussions, Frederic Magniez for a reference to~\cite{fmsp:testhidden}, and Scott Aaronson for pointing out that his Fourier checking result in~\cite{aaronson:bqpph} was the first constant-vs-polynomial quantum speed-up in property testing.


\end{document}